\patchcmd{\thebibliography}{\chapter*}{\section*}{}{}
	\crefname{theorem}{Theorem}{Theorems}
	\crefname{section}{Section}{Sections}
	\crefname{algorithm}{Algorithm}{Algorithms}	
	\crefname{assumption}{Assumption}{Assumptions}
	\crefname{construction}{Construction}{Constructions}
	\crefname{corollary}{Corollary}{Corollaries}
	\crefname{conjecture}{Conjecture}{Conjectures}
	\crefname{definition}{Definition}{Definitions}
	\crefname{example}{Example}{Examples}
	\crefname{experiment}{Experiment}{Experiments}
	\crefname{counterexample}{Counterexample}{Counterexamples}
	\crefname{lemma}{Lemma}{Lemmata}
	\crefname{observation}{Observation}{Observations}
	\crefname{proposition}{Proposition}{Propositions}
	\crefname{remark}{Remark}{Remarks}
	\crefname{claim}{Claim}{Claims}
	\crefname{fact}{Fact}{Facts}
	\crefname{note}{Note}{Notes}
	\crefname{table}{Table}{Tables}
	\crefname{figure}{Figure}{Figures}
         \crefname{appendix}{Appendix}{Appendices}
	\crefname{equation}{Equation}{Equations}
 \crefname{appendix}{App.}{Appendices}
 \crefname{section}{Sec.}{Secs.}
 \crefname{figure}{Fig.}{Figs.}
 \crefname{table}{Tab.}{Tabs.}
 \crefname{remark}{Rem.}{Rems.}
 \crefname{theorem}{Thm.}{Thms.}
 \crefname{definition}{Def.}{Defs.}
 \crefname{equation}{Eq.}{Eqs.}
 \crefname{lemma}{Lem.}{Lems.}
\let\classAND\AND
\let\AND\relax
\let\AND\classAND
\let\classOR\OR
\let\OR\relax
\let\OR\classOR
\let\classXOR\XOR
\let\XOR\relax
\let\XOR\classXOR
\let\classNOT\NOT
\let\NOT\relax
\let\NOT\classNOT
\newtheorem{definition}{Definition}
\newtheorem{theorem}{Theorem}
\newtheorem{proposition}{Proposition}
\renewcommand{\yen}[1]{\textcolor{black}{#1}}
\DeclareMathAlphabet{\mathpzc}{OT1}{pzc}{m}{it}
\renewcommand{\set}[1]{\{#1\}} 
\renewcommand{\norm}[1]{\lVert #1 \rVert}
\newcommand{\avgcollentropy}{\tilde{\mathbf{H}}_2}
\newcommand{\COL}{\mathtt{COL}} 
\newcommand{\SD}{\mathbf{SD}}  
\newcommand{\Zq}{\mathbb{Z}_q}
\newcommand{\Rn}{\mathbb{R}^n}
\newcommand{\et}{{\it{et al.}}}
\newcommand{\F}{\mathcal{F}}
\renewcommand{\H}{\mathcal{H}}
\renewcommand{\L}{\mathcal{L}}
\newcommand{\M}{\mathcal{M}}
\newcommand{\LB}{\mathcal{L}(\bm{B})}
\newcommand{\LtB}{\mathcal{L}_{\triangle}(\bm{B})}
\newcommand{\ga}{g^\alpha}
    \newcommand{\bB}{\bm{B}}
\newcommand{\Bi}{\bm{B}^{-1}}
\newcommand{\bb}{\bm{b}}
\newcommand{\bc}{\bm{c}}
\newcommand{\be}{\bm{e}}
\newcommand{\bx}{\bm{x}}
\newcommand{\by}{\bm{y}}
\newcommand{\bz}{\bm{z}}
\newcommand{\bu}{\bm{u}}
\newcommand{\bv}{\bm{v}}
\newcommand{\bde}{\bm{\delta}}
\newcommand{\bbx}{\overline{\bm{x}}}
\newcommand{\ox}{\overline{x}}
\def \sample { \overset{_\text{\$}}{\leftarrow} }
\renewcommand{\hash}{\mathsf{H}}
\newcommand{\uhash}{\mathsf{UH}}
\newcommand{\game}{\texttt{Game}}
\renewcommand{\negl}{\mathsf{negl}}
\newcommand{\FNMR}{\ensuremath{\mathsf{FNMR}}\xspace}
\newcommand{\FMR}{\ensuremath{\mathsf{FMR}}\xspace}
\newcommand{\AR}{\mathsf{AR}}
\newcommand{\ARx}{\mathsf{AR}(\bm{x})}
\newcommand{\ARL}{\mathsf{AR}_{\mathcal{L}}}
\newcommand{\ARLx}{\mathsf{AR}_{\mathcal{L}}(\bm{x})}
\newcommand{\Gen}{\mathsf{Gen}}
\newcommand{\Rep}{\mathsf{Rep}}
\newcommand{\RC}{\mathsf{RC}}
\newcommand{\Enc}{\mathsf{Enc}}
\newcommand{\Dec}{\mathsf{Dec}}
\newcommand{\Setup}{\mathsf{Setup}}
\renewcommand{\pp}{\mathsf{pp_{MFFE}}}
\newcommand{\mffe}{\mathsf{MFFE}}
\newcommand{\public}{\mathsf{pp}}
\renewcommand{\vk}{\mathsf{vk}}
\renewcommand{\sk}{\mathsf{sk}}
\newcommand{\comu}{\mathsf{com}_u}
\newcommand{\coms}{\mathsf{com}_s}
\newcommand{\uid}{\mathsf{uid}}
\newcommand{\sid}{\mathsf{sid}}
\newcommand{\UID}{\mathsf{UID}}
\newcommand{\SID}{\mathsf{SID}}
\newcommand{\ru}{\mathsf{ru}}
\newcommand{\rs}{\mathsf{rs}}
\newcommand{\TriLatticeB}{\mathcal{L}_{\triangle}(\bm{B})}
\newcommand{\CV}{\mathsf{CV}}
\newcommand{\CVL}{\mathsf{CV}_{\mathcal{L}}}
\newcommand{\CVLx}{\mathsf{CV}_{\mathcal{L}}(\bm{x})}
\newcommand{\VRL}{\mathsf{VR}_{\mathcal{L}}}
\newcommand{\VRLy}{\mathsf{VR}_{\mathcal{L}}(\bm{y})}
\newcommand{\ConFMR}{\mathsf{ConFMR}}
\newcommand{\ConMFMR}{\mathsf{ConMFMR}}
\newcommand{\DL}{\ensuremath{\mathsf{DL}}\xspace}
\newcommand{\DLsketch}{\ensuremath{\mathsf{DL}^{\sf sketch}}\xspace}
\newcommand{\AdvMfake}{\mathsf{Adv}^{\mathsf{mfake}}_{\mathcal{A}}}
\newcommand{\SuccDLsketch}{\mathsf{Succ}^{\mathsf{DL}^{\mathsf{sketch}}}_{\mathbb{G}, D_B}(T)}
\newcommand{\SuccDL}{\mathsf{Succ}_{\mathbb{G}, D_A}^\DL(T)}
\newcommand{\SuccDLU}{\mathsf{Succ}_{\mathbb{G}, U}^\DL(T)}
\newcommand{\SuccCDH}{\mathsf{Succ}_\mathbb{G}^{\mathsf{cdh}}(T)}
\newcommand{\SuccSig}{\mathsf{Succ}_\mathbb{G}^{\mathsf{sig}}(T)}
\newcommand{\q}{{\texttt{query}}}
\renewcommand{\r}{{\texttt{response}}}
\begin{document}

\title{Biometrics-Based Authenticated Key Exchange with Multi-Factor Fuzzy Extractor}


\author{Hong Yen~Tran,
        Jiankun~Hu,
        ~\IEEEmembership{Senior Member,~IEEE}, 
        and~Wen~Hu,
        ~\IEEEmembership{Senior Member,~IEEE}
}



\maketitle

\begin{abstract}
Existing fuzzy extractors and similar methods provide an effective way for extracting a secret key from a user's biometric data, but are susceptible to impersonation attack: once a valid biometric sample is captured, the scheme is no longer secure. We propose a novel \textit{multi-factor} fuzzy extractor that integrates both a user's secret (e.g., a password) and a user's biometrics in the generation and reconstruction process of a cryptographic key. 
We then employ this multi-factor fuzzy extractor to construct personal identity credentials, which can be used in a new multi-factor authenticated key exchange protocol that possesses multiple important features. First, the protocol provides mutual authentication. Second, the user and service provider can authenticate each other without the involvement of the identity authority. Third, the protocol can prevent user impersonation from a compromised identity authority. Finally, even when both a biometric sample and the secret are captured, the user can re-register to create a new credential using a new secret (\textit{reusable/reissued} identity credentials). Most existing works on multi-factor authenticated key exchange only have a subset of these features. We formally prove that the proposed protocol is semantically secure. 
Our experiments carried out on the finger vein dataset SDUMLA achieved a low equal error rate (EER) of 0.04\%, a reasonable 
computation time of 0.93 seconds for the user and service provider to authenticate and establish a shared session key, and a small communication overhead of 448 bytes. 
\end{abstract}

\begin{IEEEkeywords}
biometrics, finger vein, fuzzy extractor, fuzzy signature, multi-factor, authenticated key exchange.
\end{IEEEkeywords}

\section{Introduction}\label{sec:introduction}

%
%
%
%
\IEEEPARstart{B}{iometrics} authentication \cite{jain2004introduction} refers to the use of a person's unique physiological or behavioural characteristics, such as fingerprints, finger veins, faces, or gaits, to verify his/her identity. While biometric authentication offers many advantages over traditional authentication methods such as secret authentication, there are still some weaknesses of biometric authentication in terms of accuracy and security/privacy \cite{bjorn2005biometrics, georgiev2022fingerprinting, deng2022fencesitter, becker2019poster, geng2019privacy}. First, \textit{false positives} (accepting an unauthorized person as a valid user) and \textit{false negatives} (rejecting a valid user) are major concerns due to the inherent fuzziness of biometric data. Second, \textit{spoofing biometrics} for impersonation is a serious security issue in biometric authentication systems \cite{chingovska2014biometrics}. If an imposter can successfully mimic or replicate biometric traits, they may be able to gain unauthorized access. 
For example, an imposter can recover a fingerprint from a touched object or spoof a 3D face using only one single 2D photo to bypass a fingerprint or 3D face authentication system \cite{wu2022depthfake}. Similar attacks to biometrics-based authentication systems can be found in a number of recent works \cite{li2022video, solano2020scrap, wenger2021hello, wu2022attacks}

Biometric privacy protection and biometric error handling are crucial in creating cryptographic secrets from biometrics for authentication/key exchange protocols. In the literature, some dedicated tools were proposed to deal with these issues, namely fuzzy commitment \cite{juels1999fuzzy}, fuzzy extractor \cite{dodis2008fuzzy}, fuzzy signature \cite{matsuda2016fuzzy, Katsumata2021}, and asymmetric fuzzy encapsulation \cite{wang2021biometrics}. A common drawback when applying these tools to biometrics-based authentication 
\cite{juels1999fuzzy, Katsumata2021, wang2021biometrics} 
is that all of the user's secret random keys tied to the compromised biometrics can be extracted exactly from these tools by the attacker, resulting in impersonation and permanent identity loss. In other words, these schemes are insecure against spoofing attacks and are not reusable when biometrics are compromised (See Table~\ref{tab:mffe-comparison} and B1-Table~\ref{tab:literature-comparison}). 

In biometrics-based authentication, the correct binding of personal identity and biometrics in registration phase is of fundamental importance. To prevent misbinding, individuals should be physically present at a secure place with their identity proof documents (e.g., passport) to provide their biometrics to the identity authority for the registration. If it is not the case (e.g., biometrics are scanned at users' devices without the handling by the identity authority), 
the claimed identity might be misbinded with the biometrics of another individual (e.g., the claimed identity of a victim is paired with the biometrics of an attacker). 
However, implementing this practise might lead to potential user impersonation from privileged insiders as biometrics and other secret data generated in the creation of individual identity credentials are exposed to the identity authority. 
Most existing works do not consider the aforementioned issue and hence often fail to prevent both user identity misbinding and user impersonation attacks from privileged insiders (see~B3~-~Table~\ref{tab:literature-comparison}).

Establishing a secure channel with mutual authenticated key exchange (See B4~-~Table~\ref{tab:literature-comparison}) is more desirable than unilateral authentication in most of client-server applications. In a setting when a user can establish session keys with different servers, many works (e.g., \cite{Yoon2013, He2015, Odelu2015, wazid2017secure, Zhang2019, ma2022outsider, kumar2023robust}) requires a third-party (e.g., a registration center/identity authority, a gateway, a database server) to authenticate users-service providers and establish a session key between them (See B2~-~Table~\ref{tab:literature-comparison}). These schemes generate heavy loads to the identity authority, then raise challenges to build a scalable and reliable system. Therefore, authenticated key exchange schemes without involving the identity authority are more favorable. Besides, nearly all existing works on authenticated key exchange protocols with biometrics were theoretically analysed but lack of experimental evaluation on authentication accuracy with real biometric datasets, making it harder to assess the accuracy of their models (See B5-Table~\ref{tab:literature-comparison}).

 

\begin{table}
\centering
\caption{Comparison of our multi-factor fuzzy extractor
with the related works on the fuzzy extractor and the alike. Key
comparison aspects include A1 - the number of factors in the scheme, A2 - the scheme is secure against spoofing attacks, A3 - the scheme is reusable when all factors are compromised.}
\label{tab:mffe-comparison}
\begin{tabular}{@{}|c|c|c|c|@{}}
\toprule
\textbf{Schemes}        &\textbf{A1}  & \textbf{A2}  & \textbf{A3}  \\ \midrule                          
 Fuzzy commitment \cite{juels1999fuzzy}   &  1                    & $\times$ & $\times$         \\ \midrule
Fuzzy extractor \cite{dodis2008fuzzy}        &  1 & $\times$ & $\times$      \\ \midrule
Fuzzy signature \cite{matsuda2016fuzzy, Katsumata2021}      &  1                    & $\times$ & $\times$         \\ \midrule
Fuzzy encapsulation \cite{wang2021biometrics}        &  1                    & $\times$ & $\times$         \\ \midrule
\textbf{Our multi-factor fuzzy extractor}         & 2 & $\checkmark$ & $\checkmark$                                                             \\ \bottomrule
\end{tabular}
\end{table}

\begin{table}[]
\centering
\caption{Comparison of our multi-factor authenticated key exchange  
with the state-of-the-art protocols. Key comparison aspects include
B1 - user's identity credential can be \textit{reused} for different servers and \textit{reissued} when biometrics are compromised, B2 - user and service provider can authenticate \textit{without} a registration center or a gateway, B3 - can prevent \textit{both} user identity misbinding and user impersonation from a privileged insider (e.g. a compromised registration center), B4 - provide mutual authentication, and B5 - the protocol's authentication accuracy (e.g., EER) was validated on a real biometric dataset.}
\label{tab:literature-comparison}
\begin{tabular}{@{}|c|c|c|c|c|c|@{}}
\toprule
\textbf{Protocols}        & \textbf{B1}  & \textbf{B2}  & \textbf{B3}  & \textbf{B4}  & \textbf{B5} \\ \midrule
\cite{wang2021biometrics}
& $\times$& $\checkmark$ & $\times$& $\checkmark$& $\times$ \\ \midrule
\cite{pointcheval2008multi}   &  $\times$    &  $\checkmark$                     & $\checkmark$                                                & $\times$                              & $\times$         \\ \midrule
\cite{Yoon2013}, \cite{He2015}, \cite{Odelu2015}, \cite{wazid2017secure}, \cite{kumar2023robust}        &   $\checkmark$          & $\times$                      & $\times$                                            & $\checkmark$                               & $\times$         \\ \midrule
\cite{gunasinghe2017privbiomtauth}  & $\checkmark$ &  $\checkmark$   & $\times$                                              & $\times$                                & $\checkmark$        \\ \midrule
\cite{kumari2018provably}, \cite{yang2018cryptanalysis} & $\checkmark$& $\checkmark$ & $\times$ & $\times$& $\times$ \\ \midrule
\cite{Zhang2019}         &     $\checkmark$     &  $\checkmark$                     & $\checkmark$                                        &     $\times$                            & $\times$       \\ \midrule
\cite{zhou2020authentication}, \cite{chuang2021cake}, \cite{barman2018provably} & $\checkmark$& $\checkmark$ & $\times$& $\checkmark$ & $\times$  \\ \midrule
\cite{ma2022outsider} & $\checkmark$ & $\times$ & $\checkmark$ & $\checkmark$ & $\times$ \\ \midrule
\textbf{Our protocol}        &   $\checkmark$           & $\checkmark$                                                                                                             & $\checkmark$                                     & $\checkmark$                                     & $\checkmark$                                                              \\ \bottomrule
\end{tabular}
\end{table}

In this work we first construct a \textit{multi-factor fuzzy extractor} with biometric data, and then integrate it into a multi-factor biometrics-based authenticated key exchange protocol that overcomes all of the aforementioned drawbacks in the existing ones and provides all of the desirable features B1-B5. Note that the state-of-the-art works on authenticated
key exchange with biometrics only have a subset of these features (See Table~\ref{tab:mffe-comparison} and Table~\ref{tab:literature-comparison}). Our proposed protocol is capable of simultaneously dealing with biometrics errors and potential attacks, while maintaining its efficiency. Our main contributions are summarised below.
    \begin{itemize}
        \item We extend the standard fuzzy extractor
        to obtain a novel \textit{multi-factor fuzzy extractor} that employs \textit{both} client's biometrics and her secret to generate a cryptographic key that could be used in 
        an authenticated key exchange. The secret is unknown to anyone except the client, hence providing an extra layer of protection against biometric spoofing attacks. Additionally, the scheme is \textit{reusable} even when all factors are compromised.
        \item Based on the aforementioned multi-factor fuzzy extractor, we develop a new multi-factor biometrics-based authenticated key exchange protocol that provide many desirable security features, addressing the limitations of the state-of-the-art protocols. 
        \item 
        While most existing works did not perform experiments on real biometric datasets, making it harder to assess the accuracy of their models, we implemented and evaluated our protocol on the real finger vein dataset SDUMLA~\cite{yin2011sdumla} with 3,816 finger veins from 106 individuals to validate its accuracy and efficiency. Our protocol achieved a low error equal rate (EER) of 0.04\%, a reasonably small averaged computation time of 0.93 seconds to mutually authenticate a user and a service provider, and a negligible communication overhead of 448 bytes.
    \end{itemize}

The remainder of the paper is organized as follows. In Section~\ref{sec:preliminaries}, we provide necessary concepts and notations used throughout the paper. In Section~\ref{sec:fbpc}, we formally define the multi-factor fuzzy extractor for biometric data and describe a lattice-based instantiation of such an extractor. In Section~\ref{sec:mfake}, we describe and analyse our proposed multi-factor authenticated key exchange protocol. In Section~\ref{sec:experiments}, we discuss the implementation and evaluation of the proposed protocol on the finger vein dataset. The paper is concluded in Section~\ref{sec:conclusion}.

\section{Preliminaries}
\label{sec:preliminaries}

\subsection{Fuzzy Data Setting}
In this section we recall the definition of a fuzzy data setting from Katsumata \et~\cite[Section 2.1]{Katsumata2021}, which describes how fuzzy data can be represented, the metric to measure the closeness of two data points, and how the fuzziness of the data can be quantified. 

A fuzzy data setting $\F=(X,D_X,\AR,E,D_E, \varepsilon)$ consists of the following components.
\begin{itemize}
    \item Fuzzy Data Space $X$ and its distribution $D_X$: $X$ is the space the fuzzy data $\bx$ belongs to, and $D_X$ represents the distribution of the fuzzy data, i.e. $D_X\colon X\to \RR$.
    \item Acceptance Region Function $\AR\colon X \to 2^X$: this function maps $\bx$ to a subset $\ARx\subset X$, so that $\bx'\in X$ is considered to be close to $\bx$ if and only if $\bx'\in \ARx$. The false matching rate (FMR) is defined as $\FMR\triangleq \Pr[\bx,\bx'\gets D_X\colon \bx'\in \ARx]$.
    \item Error Space $E$ and its distribution $D_E$: $E$ is the space of the measurement error and $D_E$ is its distribution, assuming that the measurement error is independent of the individual.
    \item Error Parameter $\varepsilon\in [0,1]$: this parameter defines the false non-matching rate (FNMR), where $\FNMR \triangleq \Pr[\bx \gets D_X, \be \gets D_E\colon \bx + \be \notin \ARx]\leq \varepsilon$. 
\end{itemize}

\subsection{Fuzzy Data Setting with a Lattice} \label{FDS-with-Lattice}

As biometrics can be represented by a vector in $\mathbb{R}^n$, it is convenient to describe a fuzzy data setting based on lattices. 

\begin{itemize}
    \item Let $n \in \NN$ and $\bB \in \RR^{n\times n}$. A lattice $\LB$ spanned by the basis $\bB$ is defined by $\LB \triangleq \{\bB \bz \mid \bz \in \ZZ^n\}$.
    \item For $\bx\in \Rn$ and a lattice $\L=\LB$, the closest vector of $\bx$ in $\L$, denoted $\CVLx$, is a vector $\by \in \L$ satisfying $\norm{\bx-\by}_2\leq \norm{\bx-\bB\bz}_2$ for every $\bz \in \ZZ^n$. Here $\norm{\cdot}_2$ denotes the Euclidean distance in $\Rn$.
    \item The Voronoid region of $\by \in \L=\LB$ is defined as $\VRLy\triangleq \set{\bx\mid \by = \CVLx}$. Note that $\VRLy = \VRL(\bm{0})+\by$.
\end{itemize}

The description of a fuzzy data setting $\F=(X,D_X,\AR,E,D_E,\varepsilon)$ based on a lattice (from Katsumata \et~\cite[Section 5.1]{Katsumata2021}) is given below.
\begin{itemize}
    \item Fuzzy data space is $X=\Rn$ and $D_X$ is its distribution. We associate with $X$ a lattice $\L=\LB$ in which the closest vectors in $\L$ can be efficiently computed.
    \item Acceptance region function $\AR$ is defined as $\ARx=\ARLx\triangleq \set{\bx'\in X\colon \CVL(\bx-\bx')=\bm{0}} = \VRL(\bm{0})+\bx$.
    \item Error space is $E=\Rn$, with $D_E$ is any efficiently samplable distribution over $E$ such that $\FNMR\leq \varepsilon$.
\end{itemize}

Katsumata \et~\cite[Appendix D]{Katsumata2021} also provided a concrete construction of a \textit{triangular} lattice satisfying the above description. 
Apart from the fact that the closest vectors in such a lattice can be efficiently computed, the acceptance regions of triangular lattices are regular \textit{hexagons}, which represent nicely the closeness between vectors in $\Rn$ according to the Euclidean distance.  
The triangular lattice $\L=\TriLatticeB$ with \textit{basis length} $d$ is a lattice spanned by the basis $\bB = \{\bb_1, \bb_2, \cdots, \bb_n\}$ ($\bb_j$'s are columns of $\bB$) satisfying the following two conditions.
\begin{itemize}
    \item $\norm{\bb_i}_2 = d$ for all $i \in [n]$, and
    \item $\langle \bb_i, \bb_j \rangle = d^2 / 2$ for all $i, j \in [n]$ with $i \neq j$.
\end{itemize}
Note that a larger $d$ results in a larger acceptance region.
More specifically\footnote{This construction of $\bB$ was provided to us by the authors of~\cite{Katsumata2021} on email.}, we can choose
$\bb_k = [w_1,  ... , w_{k-1},  r_k, 0,  ... , 0]^{\text{T}}$, where $r_k\hspace{-2pt} = \hspace{-2pt}\sqrt{d^2 \hspace{-2pt}-\hspace{-2pt} \sum_{i=1}^{k-1} w_i^2}$ and $w_k\hspace{-2pt} =\hspace{-2pt} (d^2/2 \hspace{-2pt}-\hspace{-2pt} \sum_{i=1}^{k-1} w_i^2)/r_k$, for $k\in[n]$.
For example, when $n=3$, we have 
\[
\begin{split}
\bb_1 &=(r_1,0,0)=(d,0,0),\\
\bb_2 &= (w_1,r_2,0)=(d/2,\sqrt{3/4}d,0),\\
\bb_3 &= (w_1,w_2,r_3)=(d/2,d/\sqrt{12},\sqrt{2/3}d).
\end{split}
\]

\textbf{Finding the closest vectors.} The closest vector of $\bx\in \Rn$ in a triangular lattice $\LtB$ defined above can be efficiently computed in $O(n^2)$ as illustrated in Algorithm~\ref{alg:CV}. Note that a vector $\bx$ represented with respect to the standard basis in $\Rn$ can be transformed into  $\bbx=(\ox_1,\ldots,\ox_n)=\bB^{-1} \bx$, which is the same vector but represented with respect to the basis $\bB$. Then, $\bx=\bB \bbx=\sum_{i=1}^n\ox_i\bb_i$. 
In our context, it is more convenient and provides better accuracy (avoiding numerical error associated with the multiplication of real numbers) to represent the input $\bx$ and output vectors $\by$ of the closest vector algorithm (Algorithm~\ref{alg:CV}) with respect to the basis $\bB$.
This algorithm is based on the following property of triangular lattice: $x_i\leq x_j$ implies $y_i\leq y_j$ and $x_i \geq x_j$ implies $y_i \geq y_j$, where $\by=(y_1,\ldots,y_n)=\sum_{i=1}^n y_i\bb_i$ is the closest lattice point of $\bx$ in $\LtB$.
Therefore, by first assuming that $x_i\in \mathbb{R}$ and $y_i \in \{0,1\}$ for every $i=1,\ldots,n$, and sorting the coefficient of $\bx$ and $\by$ in a non-decreasing order, we can restrict the list of candidates for the closest lattice point of $\bx$ to just $n+1$ of them, depending on where the zeros and the ones in $\by$ are separated.
Finally, we can translate the lattice point by adding $\lfloor \bx \rfloor$ to it if $x_i\geq 1$ for some $i$.

\begin{algorithm}[htbp]
	\caption{Find the Closest Vector $\CV_{\LtB}(\bx)$ in a Triangular Lattice~\cite[Appendix D]{Katsumata2021}}
	\label{alg:CV}
      \begin{algorithmic}[1]
	\STATE $\bx' = \bm{x} - \lfloor \bm{x} \rfloor$ \quad // so that $\bx'=(x'_1,\ldots,x'_n)$, $0\leq x'_i < 1$
        \STATE Sort the coordinates of $\bx'$ in ascending order, where $\sigma: [n] \rightarrow [n]$ is the permutation representing this sorting
        \STATE $\by^{(1)} = (\underbrace{1,1,\ldots,1}_{n})$
        \STATE $\by^{(n+1)} = (\underbrace{0,0,\ldots,0}_{n})$
		\FOR{$k$ from 2 to $n$}
		\STATE $\by^{(k)} = \by^{(k-1)}$
        \STATE $\by^{(k)}_{\sigma(k-1)} = 0$
		\ENDFOR
		\STATE $\textsf{min} = \textsf{argmin}_{k \in [n+1]}(\norm{\bB(\bx' - \by^{(k)})}_2)$ 
        \RETURN $\by=\by^{(\textsf{min})} + \lfloor \bm{x} \rfloor$ \quad //translate $\by^{(\min)}$, the closest point to $\bx'=\bx-\lfloor \bx\rfloor$, to get the closest point to $\bx$
	\end{algorithmic}
\end{algorithm}

Assuming that the input and output vectors of Algorithm~\ref{alg:CV} are represented with respect to the basis $\bB=\{\bb_1,\ldots,\bb_n\}$ of the 
lattice $\L=\LtB$, then for $\bx'\in \ARL(\bx)$, we have 
\begin{equation} \label{eq:CVL}
\CVL\big(\Bi(\bx-\bx')\big) = \bm{0}.
\end{equation}
Also, for $\bu \in \Rn$, and $\bv \in \ZZ^n$, because $\bB\bv\in \L$, we have
\begin{equation} \label{eq:CVL2}
\CVL(\bu + \bv) = \CVL(\bu) + \bv.
\end{equation}

\subsection{Conditional Collision Entropy, Statistical Distance, Universal Hash Functions and Leftover Hash Lemma}
For a joint distribution $(X, Y)$, \textit{conditional collision entropy} \cite{Katsumata2021} of $X$ given $Y$, denoted as $\avgcollentropy(X|Y) = -\log(\COL(X|Y))$, where 
\[
\COL(X|Y) \triangleq \Pr\left[(x, y), (x', y') \sample (X, Y): x = x'|y = y'\right]
\] 
is called the \textit{conditional collision probability of} $X$ given $Y$.

\textit{Statistical distance} between the two distributions $X$ and $Y$ is $\SD(X, Y) = \frac{1}{2}\sum_v|\Pr[X=v] - \Pr[Y=v]|$.

\textit{Universal hash function and Leftover Hash Lemma} \cite[Lemma A.1]{Katsumata2021}.
$\mathcal{H} = \{\uhash\colon D \rightarrow R \}$ is called a family of universal hash functions
if for all distinct values $x, x' \in D$, $\Pr_{\uhash \leftarrow \mathcal{H}}[\uhash(x) = \uhash(x')] \leq |R|^{-1}$.
Consider the following two distributions.
\[
\begin{split}
D_1 &= \left\{\uhash \sample \mathcal{H}; (x, y) \sample (X, Y): (\uhash, \uhash(x), y)\right\},\\
D_2 &= \left\{\uhash \sample \mathcal{H}; (x, y) \sample (X, Y); r \sample R \colon (\uhash, r, y)\right\}.
\end{split}
\]
Then the Leftover Hash Lemma states that
\[\SD(D_1, D_2) \leq \dfrac{1}{2}\sqrt{|R| \cdot 2^{-\avgcollentropy(X|Y)}} = \dfrac{1}{2}\sqrt{|R| \cdot \COL(X|Y)}.\]


\section{Multi-Factor Fuzzy Extractor} \label{sec:fbpc}

Several dedicated primitives that can protect biometric data and handle biometrics errors have been proposed in the literature, most notably fuzzy commitment~\cite{juels1999fuzzy}, fuzzy extractor~\cite{dodis2008fuzzy}, and fuzzy signature~\cite{Katsumata2021}. 
As pointed out in~\cite{pointcheval2008multi} and other works, a signature or authentication scheme based on these primitives suffer from impersonation attacks as once the biometrics is lost (e.g., the iris captured by a hidden camera, the fingerprint retrieved from a touched object), the attacker can pretend to be the victim without being detected.

In this work, we propose to embed an extra \textit{secret} directly to the fuzzy extractor, which is then used as a primitive to construct a multi-factor AKE (see Section~\ref{sec:mfake}) that is resilient against such an attack.  
Note that a secret can be a \textit{password}, which can be memorized by the user, or a \textit{random string}, which can be stored in a personal device such as the mobile phone.  
As shown in Table~\ref{tab:literature-comparison}, Table~\ref{tab:comp-comparison}, and Table~\ref{tab:comm-comparison}, our new AKE outperforms most previous works in the literature employing the multi-factor approach regarding several security and efficiency aspects.

\subsection{Definition}
\label{subsec:mffe}

We now define a multi-factor fuzzy extractor, which extends the standard concept of fuzzy extractor~\cite{dodis2008fuzzy}.

\begin{definition}[Multi-factor fuzzy extractor]\label{def:mffe}
An $(\F,A,D_A,B,\epsilon, \Delta)$ multi-factor fuzzy extractor (MFFE), where 
\begin{itemize}
    \item $\F\hspace{-3pt}=\hspace{-3pt}(\hspace{-2pt}X,\hspace{-2pt}D_X,\hspace{-2pt}\AR,\hspace{-2pt}E,\hspace{-2pt}D_E,\hspace{-2pt}\varepsilon\hspace{-2pt})$ represents the fuzzy data setting, 
    \item $A$ and $D_A$ 
    are the space of the secret and its distribution, 
    \item $B$ is space of the (secret) cryptographic key extracted from the biometric data and the secret, 
    \item $\epsilon$ represents the upper bound on the statistical distance between the key and the uniform bit string of 
    equal length,
    \item $\Delta$ is the sketch space,
\end{itemize}
consists of a deterministic procedure $\Setup()$ and two randomized procedures $\Gen()$ and $\Rep()$ with the following properties.
\begin{itemize}
    \item $\Setup(\F,A,B)\to \pp$: the procedure $\Setup$ takes as input the fuzzy data setting $\F$, the space of the secret $A$, and the space of the cryptographic key $B$, and generates the public parameters $\pp$ for the extractor. 
    \item $\Gen(\pp,\bx,\ga) \to (\beta,\bde, w)$: the generation procedure $\Gen$ takes as input the public parameters, a biometric data $\bx\in X$, the binding $g^\alpha$ of a secret $\alpha\in A$, and generates a cryptographic key $\beta\hspace{-2pt}\in\hspace{-2pt} B$, a sketch $\bde\hspace{-2pt} \in\hspace{-2pt} \Delta$, and helper data $w$.
    \item $\Rep(\pp, \bx', \alpha, \bde, w) \to \beta$ if $\bx'\in \AR(\bx)$: the reproduction procedure $\Rep$ takes as input the public parameters, a biometric data $\bx'\in X$, a secret $\alpha\in A$, a sketch $\bde\in \Delta$, helper data $w$, and outputs $\beta$ if $\bx'$ lies in the acceptance region of $\bx$.
\end{itemize}
\end{definition}

Let $\kappa\in \mathbb{N}$ denote a security parameter, and $\negl(\kappa)$ the set of \textit{negligible functions} in $\kappa$.
A positive-valued function $\varepsilon(\kappa)$ belongs to $\negl(\kappa)$ if for every $c>0$, there exists a $\kappa_0\in \mathbb{N}$ such that $\varepsilon(\kappa)<1/\kappa^c$ for all $\kappa > \kappa_0$.

\begin{definition}[\DLsketch assumption]
    Given a fuzzy data setting $\F=(X,D_X,\AR,E,D_E, \varepsilon)$, a multi-factor fuzzy extractor $(\F,A,D_A,B,\epsilon, \Delta)$ and a group $\GG$ with a generator $g$, we say that the discrete logarithm with sketch ($\DLsketch$) assumption holds if for all PPT adversaries $\mathcal{A}$, the following probability, denoted by $\SuccDLsketch$, belongs to $\negl(\kappa)$, where $\kappa$ is the security parameter.
    \begin{equation*}
        \Pr\hspace{-2pt} \left[
        \begin{array}{cc} \hspace{-5pt}
            \hspace{-3pt} \pp \leftarrow \Setup(\F,A,B) \\
            \hspace{-3pt}\bx \sample D_X, \alpha \sample D_A \\
            \hspace{-5pt}(\beta, \bde, w)\hspace{-2pt} \leftarrow\hspace{-2pt} \Gen(\pp, \bx, \ga)
        \end{array}\hspace{-10pt}
        : \mathcal{A}(\pp, g^\beta, \bde, w)\hspace{-2pt} =\hspace{-2pt} \beta \hspace{-1pt}
        \right]\hspace{-2pt}.
    \end{equation*}
\end{definition} \label{def:dlsketch}

The $\DLsketch$ assumption means that the DL problem, which reconstructs $\beta$ from $g^\beta$, is still a hard problem given the sketch $\bde$ and the helper data $w$. 
This assumption would imply the security property of MFFE (See Proposition \ref{pro:mffe_security}, Section \ref{subsec:mffe-lattice}), that is, it is difficult for the adversary who has obtained the sketch and the helper data $(\bde, w)$ to distinguish the extracted key $\beta$ from a uniform random value in the key space $B$ as their statistical distance is negligible. 


\subsection{A Multi-Factor Fuzzy Extractor with a Lattice} \label{subsec:mffe-lattice}

We describe below a lattice-based $(\F,A,D_A,B,\epsilon, \Delta)$-MFFE.
Note that in our case, the deep learning model feature extractor in~\cite{kuzu2021loss} produces feature vectors $\bx \in \mathbb{R}^n$ from raw biometric data.
Hence, we can assume $X=\mathbb{R}^n$. The correctness of the lattice-based MFFE is proved in Proposition~\ref{pro:mffe_correctness}.
\begin{itemize}
    \item The fuzzy data setting $\F$ has $(X=\mathbb{R}^n$, $\AR=\ARL$, where $\L=\LtB$ is a triangular lattice, $E=\Delta=\Rn$.
    \item The secret and key spaces $A$ and $B$ are both $\Zq$ for a large prime $q$.
    \item $\Setup(\F,A,B)\to \pp = \set{\bB, \GG, g, \H}$, where $\GG$ is a group of prime order $q$ with a generator $g$, 
    and $\H=\set{\uhash\colon \Zq^n\to \Zq}$ is a family of universal hash functions.
    \item $\Gen(\pp, \bx, \ga) \rightarrow (\beta, \bde, w)$ works as follows. First, compute $\bc = \lfloor \Bi \bx\rfloor || \hash(\ga)$ where $\hash$ is a collision-resistant hash function modulo $q$. Select a random hash function $\uhash \sample \H$ and calculate 
    $\beta = \uhash(\bc)$. Here, $\uhash(\bc)$ computes the inner-product of $\bc$ and a random $(n\hspace{-2pt}+\hspace{-2pt}1)$-dimensional vector in $\Zq^{n+1}$ modulo $q$.
    Sample $r \sample \Zq$ and calculate $w = g^r, \bde = \Bi\bx - \Enc(k, \lfloor \Bi\bx \rfloor)$ where $\Enc(k, \lfloor \Bi\bx \rfloor)$ is a symmetric encryption, e.g. AES-256 of $\lfloor \Bi\bx \rfloor$ 
    using the key $k = \hash(w, (\ga)^r)$. 
    \item $\Rep(\pp, \bx', \alpha, \bde, w) \rightarrow \beta$. First, recover the key $k = \hash(w, w^\alpha)$ using the secret $\alpha$ and the helper data $w$. Compute $\CVL(\Bi\bx' - \bde)$, then recover $\bc = \Dec(k, \CVL(\Bi\bx' - \bde))||\hash(g^\alpha)$. 
    Return $\beta=\uhash(\bc)$.
\end{itemize}

Compared to fuzzy signature and linear sketch in Katsumata \et~\cite{Katsumata2021}, the introduction of $\Enc(k, \lfloor \Bi\bx \rfloor)$ in computing sketch $\bde$ is to include the second factor $\alpha$ in the form of the encryption/decryption key $k$. This modification makes the reconstruction of $\beta$ from $\bde$ harder for the adversary, requiring her to obtain both a close biometrics and $\alpha$.

\begin{proposition}[MFFE Correctness]
\label{pro:mffe_correctness}
In the lattice-based $(\F,A,D_A,B,\epsilon, \Delta)$-MFFE described above, let $(\beta, \bde, w) \gets \Gen(\pp, \bx, \ga)$ and assume that $\bx'\in \ARL(\bx)$. Then $\Rep(\pp, \bx', \alpha, \bde, w)=\beta$.
\end{proposition}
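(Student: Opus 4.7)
The plan is to unfold both $\Gen$ and $\Rep$ and show that each intermediate value used to produce $\beta$ coincides. The statement reduces to two sub-claims: (i) the symmetric key $k$ recovered by $\Rep$ equals the one used by $\Gen$, and (ii) the quantity fed into $\uhash$ by $\Rep$ equals the vector $\bc = \lfloor \Bi \bx \rfloor \, \| \, \hash(g^\alpha)$ used by $\Gen$. Once both hold, since $\uhash$ is fixed in $\pp$, the outputs match.

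Claim (i) is immediate from the group-theoretic identity $w^\alpha = (g^r)^\alpha = (g^\alpha)^r$, so the hash $\hash(w, w^\alpha)$ computed in $\Rep$ is identical to $\hash(w, (g^\alpha)^r)$ computed in $\Gen$. For claim (ii), I would analyze $\Bi \bx' - \bde$. Substituting $\bde = \Bi \bx - \Enc(k, \lfloor \Bi \bx \rfloor)$ gives
\[
\Bi \bx' - \bde \;=\; \Bi(\bx' - \bx) + \Enc(k, \lfloor \Bi \bx \rfloor).
\]
Interpreting the AES ciphertext $\Enc(k, \lfloor \Bi \bx \rfloor)$ as an integer vector in $\ZZ^n$, I can apply \cref{eq:CVL2} to pull it outside the closest-vector operator, obtaining
\[
\CVL\bigl(\Bi \bx' - \bde\bigr) \;=\; \CVL\bigl(\Bi(\bx' - \bx)\bigr) + \Enc(k, \lfloor \Bi \bx \rfloor).
\]
The assumption $\bx' \in \ARL(\bx)$ together with \cref{eq:CVL} (and the central symmetry of the Voronoi region of $\bm{0}$, which gives $\CVL(-\bu) = -\CVL(\bu)$ for $\bu$ in the principal Voronoi cell) yields $\CVL(\Bi(\bx' - \bx)) = \bm{0}$. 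Hence the first argument of $\Dec$ in $\Rep$ is exactly $\Enc(k, \lfloor \Bi \bx \rfloor)$, and by correctness of the symmetric cipher the decryption returns $\lfloor \Bi \bx \rfloor$. Concatenating with $\hash(g^\alpha)$ reproduces $\bc$, and therefore $\Rep$ outputs the same $\beta = \uhash(\bc)$.

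The only real obstacle is the typing of $\Enc(k, \lfloor \Bi \bx \rfloor)$: \cref{eq:CVL2} requires the ``shift'' to be an integer-coordinate vector, so I need to commit to an explicit embedding of AES ciphertexts (regarded as bit strings) into $\ZZ^n$ that is preserved through the subtraction defining $\bde$ and the subtraction inside $\CVL$. Everything else is a direct bookkeeping argument using the two lattice identities and the functional correctness of the underlying symmetric encryption and universal hash.
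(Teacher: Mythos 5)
Your proof is correct and follows essentially the same route as the paper's own: recover $k$ from the group identity $w^\alpha=(g^r)^\alpha=(\ga)^r$, substitute the sketch to get $\Bi\bx'-\bde=\Bi(\bx'-\bx)+\Enc(k,\lfloor\Bi\bx\rfloor)$, apply \cref{eq:CVL} and \cref{eq:CVL2} to conclude $\CVL(\Bi\bx'-\bde)=\Enc(k,\lfloor\Bi\bx\rfloor)$, then decrypt and hash. In fact you are more careful than the paper on two points it silently elides: the sign discrepancy between $\CVL(\Bi(\bx'-\bx))$ and the $\CVL(\Bi(\bx-\bx'))=\bm{0}$ guaranteed by \cref{eq:CVL}, which you resolve via central symmetry of the Voronoi cell, and the need to fix an embedding of the ciphertext $\Enc(k,\lfloor\Bi\bx\rfloor)$ into $\ZZ^n$ so that \cref{eq:CVL2} is applicable.
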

\begin{proof}
As $\alpha$ is the secret used in $\Gen(\cdot)$, $k = \hash(w, w^\alpha) = \hash(w, (g^r)^\alpha) = \hash(w, (\ga)^r)$.
If $\bx'\in \ARL(\bx)$, 
then using \eqref{eq:CVL} and \eqref{eq:CVL2}, we have:
\begin{align*}    
    &\CVL(\Bi \bx'-\bde) \\ &= \CVL \left(\Bi\bx' - \left(\Bi\bx - \Enc(k, \lfloor \Bi\bx \rfloor) \right)\right) \\
    &= \CVL\left(\Bi(\bx'-\bx)+ \Enc(k, \lfloor \Bi\bx \rfloor) \right) \\
    &=  \Enc(k, \lfloor \Bi\bx \rfloor)  .
\end{align*}
Therefore,
\begin{align*}
    &\uhash(\Dec(k, \CVL(\Bi\bx' - \bde))||\hash(g^\alpha))
    \\
    = &\uhash(\Dec(k, \Enc(k, \lfloor \Bi\bx \rfloor)||\hash(g^\alpha)) 
    \\
    =& \uhash(\lfloor \Bi\bx \rfloor||\hash(g^\alpha)) = \beta, 
\end{align*}
as claimed.
\end{proof}

The security property of MFFE is captured in Proposition~\ref{pro:mffe_security}, which states that the extracted cryptography key $\beta$ is nearly uniform even for the adversaries who observe the sketch $\bde$.

\begin{proposition}[MFFE Security]
\label{pro:mffe_security}
Given the lattice-based $(\F,A,D_A,B,\epsilon, \Delta)$-MFFE described above and $(\beta, \bde, w) \gets \Gen(\pp, \bx, \ga)$, if the conditional multi-factor false matching rate ($\ConMFMR$) 
belongs to $\negl(\kappa)$, 
then\footnote{Note that since $w$ and $\beta$ are independent, knowing $w$ does not give the adversary any additional information about $\beta$.}
\[\SD((\beta, \bde),(b \sample B, \bde)) \in \negl(\kappa),\] 
where 
$\ConMFMR$ is defined as
\begin{equation}
    \ConMFMR \triangleq \Pr \left [
        \begin{array}{cc}
            (\bx, \alpha), (\bx', \alpha') \leftarrow (D_X, D_A), \\
            \bde \leftarrow \Gen(\pp, \bx, \ga), \\
            \bde' \leftarrow \Gen(\pp, \bx', g^{\alpha'}) :\\
            \bx' \in \AR(\bx), \alpha = \alpha' | \bde = \bde'
        \end{array}
        \right ].
\end{equation}
\end{proposition}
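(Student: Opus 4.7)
The strategy is to invoke the Leftover Hash Lemma and reduce the claim to an upper bound on the conditional collision probability $\COL(\bc\mid \bde)$, where $\bc=\lfloor \Bi\bx\rfloor \,||\, \hash(\ga)$. Since $\beta=\uhash(\bc)$ with $\uhash\sample \H$ a universal hash family into $B=\Zq$, the lemma stated in Section~\ref{sec:preliminaries} immediately gives
\[
\SD\bigl((\uhash, \uhash(\bc), \bde),\, (\uhash, r, \bde)\bigr)\; \leq\; \tfrac{1}{2}\sqrt{q\cdot \COL(\bc \mid \bde)}.
\]
Since $\uhash$ is effectively a public parameter and $w$ is independent of $\bc$ given $\bde$, a standard data-processing argument transfers this bound to $\SD((\beta,\bde),(b\sample B,\bde))$.

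The heart of the proof is therefore to show $\COL(\bc\mid \bde)\leq \ConMFMR+\negl(\kappa)$. A collision $\bc=\bc'$ amounts to the two equalities $\lfloor \Bi\bx\rfloor = \lfloor \Bi\bx'\rfloor$ and $\hash(\ga)=\hash(g^{\alpha'})$. The second reduces to $\alpha=\alpha'$, up to a negligible term, by collision resistance of $\hash$. For the first, I plan to use the sketch identity implied by $\bde=\bde'$, namely $\Bi(\bx-\bx')=\Enc(k,\lfloor\Bi\bx\rfloor)-\Enc(k',\lfloor\Bi\bx'\rfloor)$: since $\Enc$ outputs in $\ZZ^n$ so that its outputs are lattice points in $\bB$-coordinates, the right-hand side is a lattice point, and combined with $\alpha=\alpha'$ and the determinism of $\Enc$ this forces $\bx-\bx'$ into $\L$, which in turn pins $\bx'$ to the same Voronoi cell as $\bx$, i.e.\ $\bx'\in \AR(\bx)$. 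This aligns a collision in $\bc$ with the event defining $\ConMFMR$.

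Plugging the bound in yields $\SD((\beta,\bde),(b,\bde)) \leq \tfrac{1}{2}\sqrt{q\,(\ConMFMR+\negl(\kappa))}$, which is negligible in $\kappa$ whenever $q=\poly(\kappa)$ and $\ConMFMR\in \negl(\kappa)$, completing the proof. The hardest part will be the geometric bridging in the second step: $\lfloor \Bi\bx\rfloor = \lfloor \Bi\bx'\rfloor$ is a unit-hypercube condition in $\bB$-coordinates, while $\bx'\in \AR(\bx)$ is the triangular-lattice Voronoi-cell condition, and conditioning on $\bde=\bde'$ further couples these with the internal randomness $r$ used in deriving the key $k=\hash(g^r,g^{r\alpha})$. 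Tracking these dependencies cleanly, using collision resistance of $\hash$ and determinism of $\Enc$ under a fixed key, is what translates the syntactic collision event into the semantic $\ConMFMR$ event; any residual slack should be absorbable into a negligible additive term.
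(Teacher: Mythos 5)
Your overall strategy---Leftover Hash Lemma, then bounding the conditional collision probability $\COL(\bc\mid\bde)$ by $\ConMFMR$---is the same as the paper's, and your treatment of the second coordinate via collision resistance of $\hash$ is fine (slightly more careful than the paper, which simply identifies $\hash(\ga)=\hash(g^{\alpha'})$ with $\alpha=\alpha'$). The genuine problem is your geometric step. From $\bde=\bde'$ you correctly deduce that $\Bi(\bx-\bx')$ is a difference of integer vectors, i.e., $\bx-\bx'\in\L$; but the inference ``$\bx-\bx'\in\L$ pins $\bx'$ to the same Voronoi cell as $\bx$, i.e., $\bx'\in\AR(\bx)$'' is false. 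By definition $\AR(\bx)=\VRL(\bm{0})+\bx$, so $\bx'\in\AR(\bx)$ iff $\CVL(\bx-\bx')=\bm{0}$, and a \emph{nonzero} lattice vector is its own closest lattice point: taking $\bx'=\bx-\bb_1$ gives $\bx-\bx'=\bb_1\in\L$ yet $\CVL(\bx-\bx')=\bb_1\neq\bm{0}$, so $\bx'\notin\AR(\bx)$. Nonzero lattice translates never share an acceptance region. The repair uses the half of the collision event you left idle in this step: $\lfloor\Bi\bx\rfloor=\lfloor\Bi\bx'\rfloor$ forces $\Bi(\bx-\bx')$, which then equals the difference of the two fractional parts, to lie in $(-1,1)^n$; an integer vector in $(-1,1)^n$ is $\bm{0}$, hence $\bx=\bx'$ exactly, and exact equality trivially gives $\bx'\in\AR(\bx)$. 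This is in substance what the paper's chain of inequalities does: it forces $\Bi\bx=\Bi\bx'$ first, and only afterwards relaxes equality to membership in $\AR(\bx)$ in order to compare with $\ConMFMR$.

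Second, your closing claim that $\tfrac12\sqrt{q\,(\ConMFMR+\negl(\kappa))}$ is negligible ``whenever $q=\poly(\kappa)$'' rests on a hypothesis that cannot hold in this scheme: $q$ is the order of a group in which discrete log must be hard (a 255-bit prime in the paper's instantiation), so $q$ is exponential in $\kappa$; were $q$ polynomial, $\DLsketch$ and the entire AKE built on it would fall to exhaustive search. With exponential $q$, plain negligibility of $\ConMFMR$ does not make $\sqrt{q\cdot\ConMFMR}$ small; one needs the stronger quantitative hypothesis $\ConMFMR\lessapprox 2^{-(2\kappa+\omega(\log\kappa))}$ (or $\approx 2^{-\kappa}$ in the generic group model), which is exactly the regime the paper imports from Katsumata \et~\cite{Katsumata2021} at the end of its proof. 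As written, your final step does not close the argument.
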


\begin{proof}
According to the Leftover Hash Lemma \cite[Lemma A.1]{Katsumata2021}, we have
\[
\begin{split}
    & \SD\left((\beta, \bde),(b \sample B, \bde)\right) \\ 
    = & \SD\left(\left(\uhash, \uhash(\lfloor \Bi\bx\rfloor||\hash(\ga)), \bde\right),(\uhash, b \sample B, \bde)\right) \\
    \leq & \dfrac{1}{2}\sqrt{|B| \cdot \COL\big(\lfloor \Bi\bx\rfloor || \hash(\ga) | \bde\big)}. 
\end{split}
\]
Moreover, 
\begin{align*}
    &\COL\left(\lfloor \Bi\bx\rfloor || \hash(\ga) | \bde\right) \\
    =& \Pr \left[
        \begin{array}{cc}
            (\bx, \alpha, \bde), 
            (\bx', \alpha', \bde') \leftarrow (X, A, \Delta): \\
            \lfloor \Bi\bx\rfloor || \hash(\ga) = \lfloor \Bi\bx'\rfloor || \hash(g^{\alpha'}) | \bde = \bde'
        \end{array}
        \right] \\
    =& \Pr \left[
        \begin{array}{cc}
            (\bx, \alpha, \bde), (\bx', \alpha', \bde') \leftarrow (X, A, \Delta) :\\
            \lfloor \Bi\bx\rfloor = \lfloor \Bi\bx'\rfloor, \alpha = \alpha'| \bde = \bde'
        \end{array}
        \right] \\
    \leq & \Pr \left[
        \begin{array}{cc}
            (\bx, \alpha, \bde), (\bx', \alpha', \bde') \leftarrow (X, A, \Delta) :\\
            \Enc(k, \lfloor \Bi\bx\rfloor)  = \Enc(k, \lfloor \Bi\bx'\rfloor), \alpha = \alpha' | \bde = \bde'
        \end{array}
        \right] \\
    = & \Pr \left[
        \begin{array}{cc}
            (\bx, \alpha, \bde), (\bx', \alpha', \bde') \leftarrow (X, A, \Delta) :\\
            \Bi\bx  = \Bi\bx', \alpha = \alpha'|
            \bde = \bde'
        \end{array}
        \right] \\
    = & \COL\left(\left(\Bi\bx, \alpha\right) | \bde\right) \\
    < & \Pr \left [
        \begin{array}{cc}
            (\bx, \alpha), (\bx', \alpha') \leftarrow (X, A) \\
            \bde \leftarrow \Gen(\pp, \bx, \ga) \\
            \bde' \leftarrow \Gen(\pp, \bx', g^{\alpha'}) :\\
            \bx' \in \AR(\bx), \alpha = \alpha' | \bde = \bde'
        \end{array}
        \right ]
    = \ConMFMR.
\end{align*}
From here, using the same argument as in~\cite[Sec. 5.4]{Katsumata2021} but replacing their \textit{conditional false matching rate} $\ConFMR$ (i.e., the probability of conditional matching of \textit{biometrics factor only}) by our \textit{conditional multi-factor false matching rate} $\ConMFMR$ (i.e., the probability of conditional matching of \textit{both biometrics and secret factors}), we have 
\begin{itemize}
    \item[(1)] if 
    $\ConMFMR \lessapprox 2^{-(2\kappa+\omega(\log\kappa))}$ then the standard \DL assumption implies \DLsketch assumption;
    \item[(2)] if 
    $\ConMFMR \approx 2^{-\kappa}$ (or $2^{-\omega(\log \kappa)}$) then the \DLsketch assumption holds in the generic group model.
\end{itemize}
In the other words, the statistical distance of the extracted key $\beta$ and a uniformly random variable is negligible. 
\end{proof}


\subsection{Comparisons to Related Works on Fuzzy Extractor and Its Similar Schemes} \label{subsec:comparision-mffe}

\textit{Fuzzy commitment}~\cite{juels1999fuzzy} with the code-offset construction was the first approach dealing with the fuzziness of biometrics utilising error-correcting codes and cryptography. The idea is to use a biometric template $\bx$ to commit to a secret uniform random codeword $\bc$ without revealing the codeword. This codeword can only be restored by using a 'close-enough' biometric sample $\bx'$ and an error correction code using the distance between that codeword and the original biometric template ($\bde = \bx - \bc$). However, once the adversary obtains a valid biometric sample, she can always recover the secret codeword $\bc$ even when the user selects a new one, rendering the scheme unusable. 

\textit{Fuzzy extractor} was proposed in \cite{dodis2008fuzzy} to extract a nearly uniform cryptographic key $\beta$ from a biometrics $\bx$, which can then be recovered by using another biometrics $\bx'$ close to $\bx$ (with respect to metrics such as the Hamming distance and the edit distance) and a publicly available sketch $\bde$. 
As $\beta$ remains nearly uniform even given $\bde$, it can be used as a secret key in a cryptographic application without being stored as it can always be recovered from a valid biometric sample $\bx'$ close enough to $\bx$. 
Hence, if the adversary captures a valid biometric sample, then $\beta$ can always be recovered. Thus, once the biometrics is lost, the extractor can no longer generate a safe key.

\textit{Fuzzy signature} was recently introduced in \cite{Katsumata2021} to sign a message using a user's biometrics $\bx$. A linear sketch $\bde$ over a lattice is used to extract a cryptographic key $\beta$ from the user's biometrics, which is then fed as a \textit{signing} key to the standard Schnorr signature scheme to produce a fuzzy signature. The signature can be verified at a remote server using a \textit{verification} key generated earlier from the user's enrolled biometric template. The subtle difference with a fuzzy extractor~\cite{dodis2008fuzzy} is that thanks to the linearity of the sketch, given a valid biometric sample, the verification works \textit{without} requiring the recovery of the exact key produced earlier in the keys generation process. 
However, similar to fuzzy extractors, once a valid biometric sample is captured by an adversary, the fuzzy signature scheme is no longer secure.

With the same purpose of protecting the privacy of a user's biometrics while tolerating noises for a bidirectional biometrics authenticated key exchange, an asymmetric \textit{fuzzy encapsulation} mechanism is proposed in \cite{wang2021biometrics}. In this scheme, the user derives a biometric secret key directly from her scanned biometrics and generates a public key to encapsulate messages which can be decapsulated only by a similar biometric secret key. Similar to the above schemes, once a biometric sample is lost, the scheme is no longer secure.

All of the aforementioned schemes suffer from the \textit{spoofing} attack: if the user's biometric data is compromised, all the keys generated from the biometrics can be reproduced by the attacker. 
For example, an adversary can successfully impersonate a user by using their stolen biometrics to generate a valid fuzzy signature~\cite{Katsumata2021} or to decapsulate a secret message~\cite{wang2021biometrics}.
This is due to the fact the keys in such schemes are generated from the biometric data only. 

\textit{Our proposed multi-factor fuzzy extractor} (MFFE) addresses this gap by embedding an additional factor (a secret/password) directly into a fuzzy extractor~\cite{dodis2008fuzzy}. 
This MFFE prevents the adversary from reproducing the cryptographic keys even when the biometric data is leaked, as long as the secret (password) remains secure. Similarly, the adversary cannot reconstruct the keys when the secret is leaked but the biometric data remains unknown. 
In particular, in the event that \textit{both biometric data and secret are compromised}, the user can simply use a new secret to prevent the adversary from obtaining the keys in the future executions of MFFE. By contrast, biometrics-only schemes such as~\cite{juels1999fuzzy, dodis2008fuzzy, Katsumata2021, wang2021biometrics} can no longer be used in the future once the biometric data is lost.  
This feature is particularly useful when MFFE is employed in a multi-factor authenticated key exchange, which \textit{enables a recovery option} for the user when all the factors are compromised.  

\section{An Authenticated Key Exchange Based on Multi-Factor Fuzzy Extractor}
\label{sec:mfake}

We developed in this section a new authenticated key exchange (AKE) protocol based on the multi-factor fuzzy extractor introduced in Section~\ref{sec:fbpc}. 
Unlike most existing password-authenticated key exchange (PAKE) protocols, we provide a complete protocol, which consists of three components: biometrics extraction and registration (using our multi-factor fuzzy extractor), password creations (between the user and the service provider, using our multi-factor fuzzy extractor), and a standard PAKE. Our protocol is ready to be deployed on any biometric dataset. We first give an overview of the protocol in Section~\ref{sec:system-overview}, then provide a full description in Section~\ref{sec:mfake-description}. The correctness and security proofs can be found in Section~\ref{sec:mfake-analysis}. Comparisons with related works are provided in Section~\ref{sec:mfake-relatedworks}.
\subsection{System overview} \label{sec:system-overview}

\begin{figure}[htb!]
\centering
\includegraphics[scale=0.87]{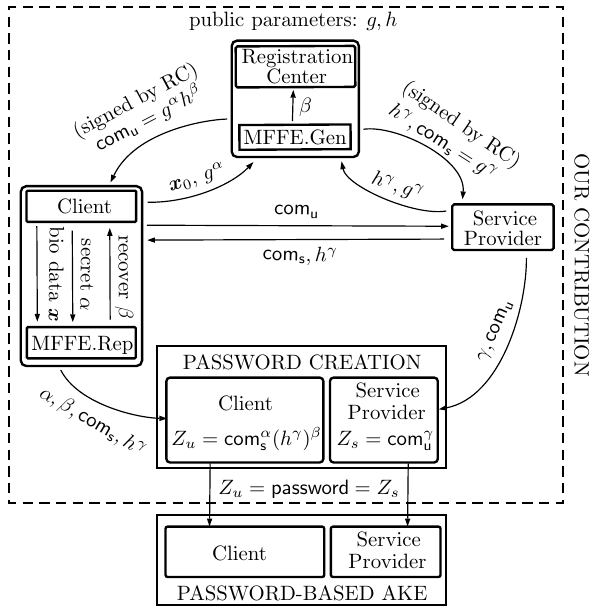}
\caption{An overview of our multi-factor authenticated key exchange protocol, which is based on 
a novel multi-factor fuzzy extractor (MFFE) that allows a client to use a secret $(\alpha)$ in combination with her bio-data $(\bx\text{ and } \bx_0)$ to generate a cryptographic key $(\beta)$. The secret is also involved in the creation of a common password between the client and a service provider, which can then be employed in a standard password-based authenticated key exchange~\cite{abdalla2005simple}. The presence of the secret prevents an impersonation attack from a semi-honest registration centre, who knows the biometric data $\bx_0$, and also makes the protocol reusable from biometric and secret leakage. }
\label{fig:overview}
\end{figure}

Our scheme involves three parties: User - the client party that tries to be authenticated with multi-authentication factors, Service Provider (SP) - the server party that authenticates the user before allowing the user to perform any transactions, and Registration Center (RC) - the identity authority party that approves service providers and individuals' identities. 
The following assumptions have been made:1- The registration and update/revocation phases happen via a secure channel, for example, users present in person at RC's secure location with their physical identity documents (e.g., passport) and communicate with RC via a \textit{secure} channel established at that time. 2- In the authentication phase, users and remote servers communicate via \textit{insecure} public channels;
We consider a malicious outsider adversary who has a complete control over the communication channel in the authentication phase; thus, she is capable of intercepting, modifying, and replaying messages transmitted over this channel. This adversary is able to steal a user's biometrics or access a user's device to obtain the stored secret but cannot corrupt all authentication factors. We also consider a semi-honest insider adversary who can corrupt RC to record data generated during the registration process and eavesdrop on the communication channel in the authentication phase. 

We aim to achieve a mutually authenticated key exchange which satisfies semantically security 
and is robust to impersonation from privileged insider attacks. Unless all authentication factors have been corrupted, it is difficult for an adversary to impersonate a client to deceive the server or vice versa. All the keys agreed on between a client and a server are semantically secure.

\begin{figure*}
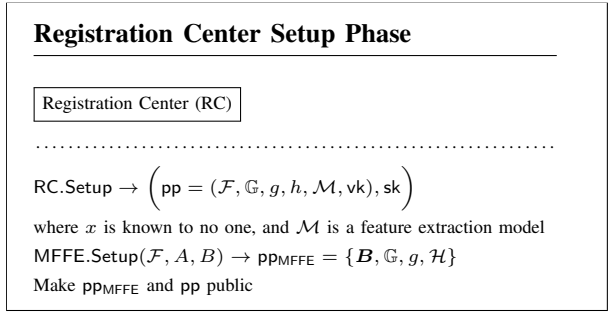

    \centering
    \begin{mdframed}[userdefinedwidth=8cm,align=center]
        \procedure[codesize=\scriptsize]{\textbf{Registration Center Setup Phase}}{
            \\
	\framebox{Registration Center (RC)} \< \< \< \<    
        \pclb
	\pcintertext [dotted] {}
        \RC.\Setup \to \bigg(\public=(\F, \GG, g, h, \M, \vk), \sk\bigg) \\
        \text{where $x$ is known to no one, and } \M \text{ is a feature extraction model } \< \< \< \< \\
        \mffe.\Setup(\F,A,B)\to \pp = \set{\bB, \GG, g, \H} \< \< \< \< \\
        \text{Make } \pp \text{ and } \public \text{ public} \< \< \< \< 
	}
	\end{mdframed}
    \caption{Registration Center Setup}
    \label{fig:setup}
\end{figure*}

\begin{figure*}
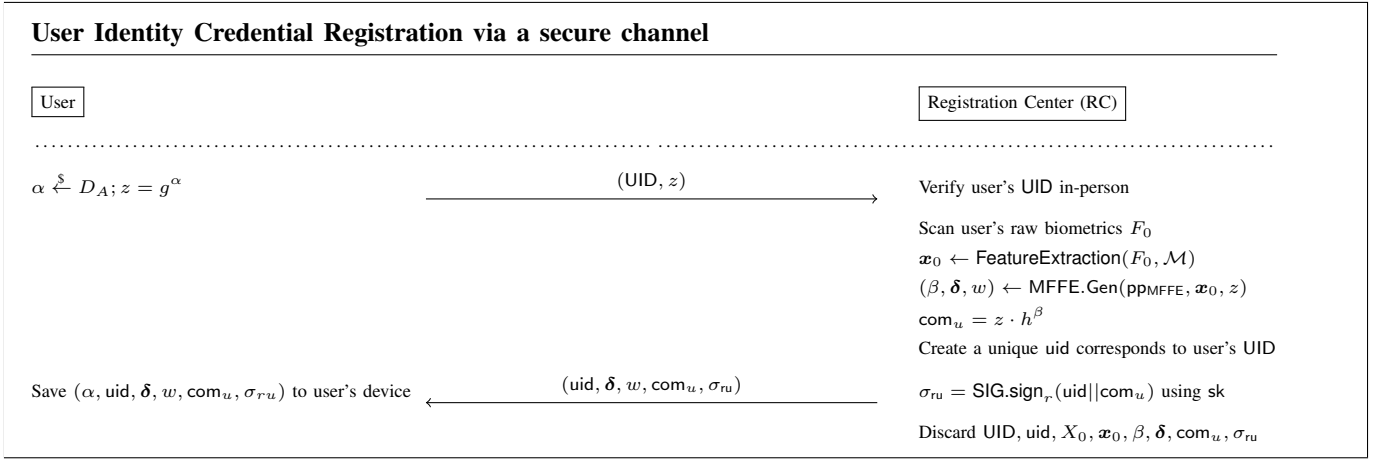

    \centering
    \begin{mdframed}
    	\procedure[codesize=\scriptsize]{\textbf{User Identity Credential Registration via a secure channel}}{
            \\
			\framebox{User} \< \< \< \< \framebox{Registration Center (RC)}   
            \pclb
		    \pcintertext [dotted] {}
            \alpha \sample D_A; z = \ga
             \< \< \sendmessageright*[6cm]{(\UID, z)} \< \< \text{Verify user's } \textsf{UID} \text{ in-person}\\
             \< \< \< \< \text{Scan user's raw biometrics } F_0 \\
            \< \< \< \< \bm{x}_0 \leftarrow \textsf{FeatureExtraction}(F_0, \M) \\
			\< \< \< \< (\beta,\bde, w)\gets \mffe.\Gen(\pp,\bx_0,z) \\
			 \< \<  \< \<  \comu = z \cdot h^\beta \\
             \< \< \< \<  \text{Create a unique } \uid \text{ corresponds to user's } \UID  \\
             \text{Save } (\alpha, \uid, \bde, w, \comu, \sigma_{ru}) \text{ to user's device}  \< \< \sendmessageleft*[6cm]{(\uid, \bde, w, \comu, \sigma_{\ru})} \< \< \sigma_{\ru} = \textsf{SIG.sign}_{r}(\uid||\comu) \text{ using } \sk\\
             \< \< \< \< \text{Discard } \UID, \uid, X_0, \bx_0, \beta, \bde, \comu, \sigma_{\ru}
			}
	\end{mdframed}
    \caption{User Identity Credential Registration. Note that RC does not know $\alpha$ and only needs $z=\ga$ to run $\Gen$.}
    \label{fig:registration_user}
\end{figure*}
\begin{figure*}
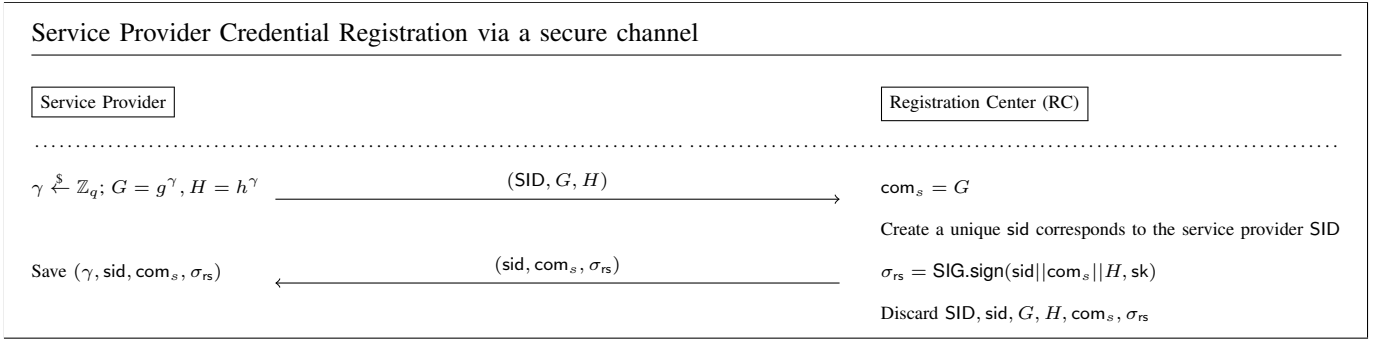

    \centering
    \begin{mdframed}
    	\procedure[codesize=\scriptsize]{Service Provider Credential Registration via a secure channel}{
            \\
			\framebox{Service Provider} \< \< \< \< \framebox{Registration Center (RC)}   
            \pclb
		    \pcintertext [dotted] {}
            \gamma \sample \Zq
            \text{; } G = g^\gamma, H = h^\gamma
             \< \< \sendmessageright*[7.5cm]{(\SID, G, H)} \< \< 
			 \coms = G\\
             \< \< \< \<  \text{Create a unique } \sid \text{ corresponds to the service provider } \SID  \\
             \text{Save } (\gamma, \sid, \coms, \sigma_{\rs}) \< \< \sendmessageleft*[7.5cm]{(\sid, \coms, \sigma_{\rs})} \< \< \sigma_{\rs} = \textsf{SIG.sign}(\sid||\coms||H, \sk)\\
             \< \< \< \< \text{Discard } \SID, \sid, G, H, \coms, \sigma_{\rs} 
			}
	\end{mdframed}
    \caption{Service Provider Credential Registration}
    \label{fig:registration_server}
\end{figure*}

\begin{figure*}
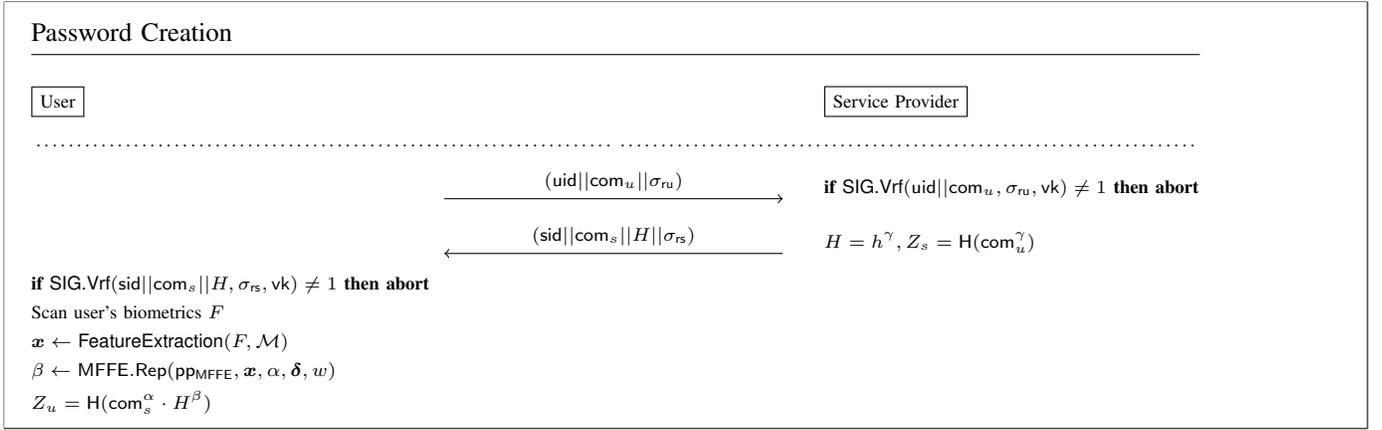

    \centering
    \begin{mdframed}
    	\procedure[codesize=\scriptsize]{Password Creation}{
            \\
			\framebox{User} \< \< \< \< \framebox{Service Provider} \pclb
		    \pcintertext [dotted] {}
            \< \< \sendmessageright*[4.5cm]{(\uid || \comu || \sigma_{\ru})} \< \< \textbf{if } \textsf{SIG.Vrf}(\uid||\comu, \sigma_{\ru}, \vk) \neq 1 \textbf{ then abort} \\ 
            \< \< \sendmessageleft*[4.5cm]{(\sid||\coms||H||\sigma_{\rs})}\< \< H = h^\gamma, Z_s = \hash(\comu^\gamma)  \\
            \textbf{if } \textsf{SIG.Vrf}(\sid||\coms||H, \sigma_{\rs}, \vk) \neq 1 \textbf{ then abort} \\
            \text{Scan user's biometrics } F \\
            \bx \leftarrow \textsf{FeatureExtraction}(F, \mathcal{M}) \\
			\beta \gets \mffe.\Rep(\pp, \bx, \alpha, \bde, w) \\
            Z_u = \hash(\coms^{\alpha} \cdot H^{\beta}) 
            }
	\end{mdframed}
    \caption{Password Creation. Note that the user must know the secret $\alpha$ and have $\bx \in \ARL(\bx_0)$ to recover the key $\beta$.}
    \label{fig:password}
 \end{figure*}

 \begin{figure*}
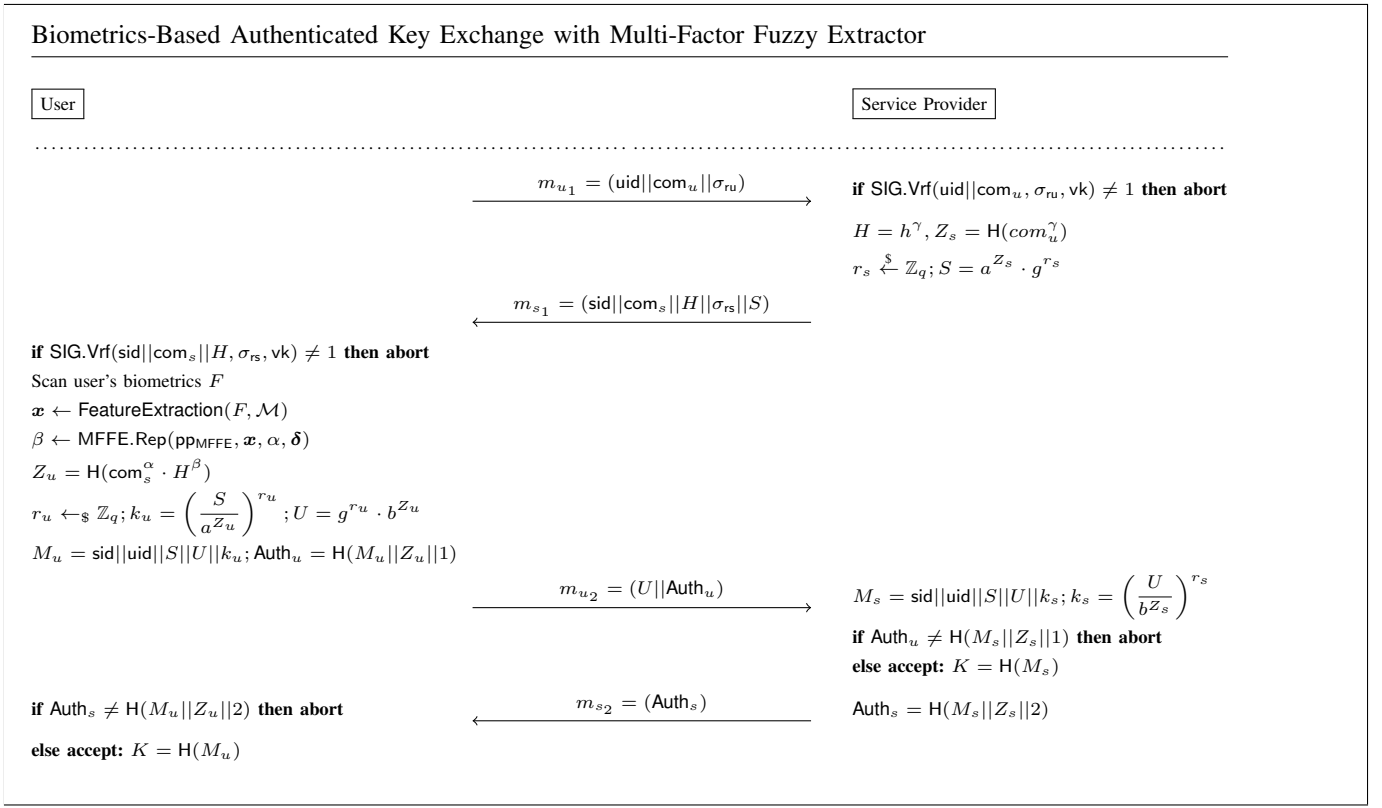

    \centering
    \begin{mdframed}
    	\procedure[codesize=\scriptsize]{Biometrics-Based Authenticated Key Exchange with Multi-Factor Fuzzy Extractor}{
            \\
			\framebox{User} \< \< \< \< \framebox{Service Provider} \pclb
		    \pcintertext [dotted] {}
            \< \< \sendmessageright*[4.5cm]{m_{u_1}=(\uid || \comu || \sigma_{\ru})} \< \< \textbf{if } \textsf{SIG.Vrf}(\uid||\comu, \sigma_{\ru}, \vk) \neq 1 \textbf{ then abort} \\ 
            \< \< \< \< H = h^\gamma, Z_s = \hash(com_u^\gamma)  \\
            \< \< \< \< r_s \sample \mathbb{Z}_q; S = a^{Z_s} \cdot g^{r_s} \\
            \< \< \sendmessageleft*[4.5cm]{m_{s_1} = (\sid||\coms||H||\sigma_{\rs}||S)}\< \< \\
            \textbf{if } \textsf{SIG.Vrf}(\sid||\coms||H, \sigma_{\rs}, \vk) \neq 1 \textbf{ then abort} \\
            \text{Scan user's biometrics } F \\
            \bx \leftarrow \textsf{FeatureExtraction}(F, \mathcal{M}) \\
			\beta \gets \mffe.\Rep(\pp, \bx, \alpha, \bde) \\
             Z_u = \hash(\coms^{\alpha} \cdot H^{\beta})   \\
             r_u \leftarrow_\$ \mathbb{Z}_q; 
            k_u = \left ( \dfrac{S}{a^{Z_u}}\right )^{r_u}; U = g^{r_u} \cdot b^{Z_u} \\
            M_u = \sid||\uid||S||U||k_u;
            \textsf{Auth}_u = \hash(M_u||Z_u||1)\\
             \< \< \sendmessageright*[4.5cm]{m_{u_2} = (U||\textsf{Auth}_u)} \< \< M_s = \sid||\uid||S||U||k_s; k_s = \left ( \dfrac{U}{b^{Z_s}}\right )^{r_s}\\
            \< \< \< \<  \textbf{if }  \textsf{Auth}_u \neq \hash(M_s||Z_s||1) \textbf{ then abort}\\
            \< \< \< \< \textbf{else accept: } K = \hash(M_s) \\
            \textbf{if }  \textsf{Auth}_s \neq \hash(M_u||Z_u||2) \textbf{ then abort}  \< \< \sendmessageleft*[4.5cm]{m_{s_2} = (\textsf{Auth}_s)} \< \< \textsf{Auth}_s = \hash(M_s||Z_s||2) \\
            \textbf{else accept: } K = \hash(M_u) \\
            }
	\end{mdframed}
    \caption{Biometrics-Based Authenticated Key Exchange with Multi-Factor Fuzzy Extractor. The user must have both a valid biometric sample and the correct secret/password in order to successfully authenticate with a service provider.} 
    \label{fig:mfake}
 \end{figure*}
 

\subsection{Description of the proposed scheme} \label{sec:mfake-description}

The proposed scheme consists of three main phases: 1)~\textit{Setup} - by RC to setup system parameters, 2)~ \textit{Registration} - users and service providers register to RC, and 3)~\textit{Authentication and Key Agreement} - by which
the user proves her identity and establishes a random session key with an SP.

At the \textit{Setup} phase (Fig.~\ref{fig:setup}), RC executes $\RC.\Setup$ to generate its secret key $\sk$ and public parameters $\public$, which includes a fuzzy data setting $\F$ as in Section \ref{FDS-with-Lattice}, a group $\mathbb{G}$ of prime order $q$ in which the Discrete Logarithm and the Computational Diffie-Hellman (CDH) assumption holds, two random group elements $g$ and $h$, biometrics feature extraction model $\M$, and the verification key $vk$. RC also runs $\mffe.\Setup$ to generate public parameters $\pp$ of a multi-factor fuzzy extractor as in Section \ref{subsec:mffe-lattice}.

\textit{Registration} phase includes user identity credential registration (Fig.~\ref{fig:registration_user}) and service provider's identity credential registration (Fig. \ref{fig:registration_server}). Users have to be present in person at RC’s secure location with their physical identity documents (e.g., passport). RC verifies users in person and scans user's biometrics $\bx_0$. Then, RC generates a key $\beta$ and a sketch $\bde$ by running $\mffe.\Gen$ with $\bx_0$ and the binding of user's secret/password $g^\alpha$ as input. Next, RC computes and signs the user's identity credential $\comu$ using its secret key $\sk$. Finally, the sketch and the user's signed identifier commitment were written on the user's device. All the user's retrieved data (including the user's biometrics) and the generated data (including $\beta$) are discarded. Fig.~\ref{fig:registration_server} illustrates the process of registration of an SP to RC. At the end of the process, the SP has its credential $\coms$ signed by RC.

In the \textit{Authentication and Key Agreement} phase (Fig.~\ref{fig:mfake}), a user and a server are authenticated to each other. First, they need to create a shared secret $Z_u=Z_s$, which then can be used in a conventional password-based authenticated key exchange (PAKE) (e.g., \cite{abdalla2005simple}). Fig. \ref{fig:password} illustrates the steps needed from both sides to create this shared secret. The user sends her identity credential with RC's signature to the server. After checking the validity of RC's signature for the integrity of the user's identity credential, the server computes the shared secret as $Z_s = \hash(\comu^\gamma)$, then sends to the user its identity credential with RC's signature. If the signature passes the user's checking then the user proceeds with the computation of the shared secret from her side. She needs to have both a biometric sample $\bx$ \textit{close to} $\bx_0$, which was the sample taken in the Registration phase, and the \textit{exact} secret/password $\alpha$ to reproduce the key $\beta$ from the multi-factor feature extractor, and then to compute $Z_u = \hash(\coms^\alpha \cdot H^\beta)$.  
At this step, both the user and the server are successful in computing their shared \textit{long-term secret}, which is $Z_s = Z_u = \textsf{CDH}(\comu, \coms)$. With this shared secret, any off-the-shelf PAKE protocol, e.g. SPAKE~ \cite{abdalla2005simple}, can then be employed to generate a session key for a secure channel between the user and the SP. 

When both biometrics and secret/password are lost, the user can repeat the registration process with the RC using a fresh secret/password $\alpha$ to obtain a new key and credential $\comu$.  
In this case, her corrupted credential $\comu$ must be added to a public revocation list by the RC. Such a list can be regularly downloaded and used by service providers to avoid authenticating corrupted user credentials.

\subsection{Analysis of the proposed scheme} \label{sec:mfake-analysis}

In this section we formally prove the correctness and the semantic security of the proposed multi-factor AKE protocol.

\subsubsection{Correctness}

Theorem~\ref{theorem-correctness} establishes the correctness of our AKE protocol.

\begin{theorem}[Correctness] \label{theorem-correctness}
    In an execution of the protocol $\mathcal{P}$ described in Section \ref{sec:mfake-description}, every valid user, who provides a correct secret $\alpha$ and a biometrics $\bx \hspace{-2pt}\in\hspace{-2pt} \AR(\bx_0)$, is authenticated and successfully establishes a random session key with a service~provider.
\end{theorem}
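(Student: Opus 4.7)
The plan is to verify correctness by tracing through the protocol end-to-end, showing that each intermediate quantity computed by the user equals the corresponding quantity computed by the service provider, assuming the preconditions hold.

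First, I would invoke \cref{pro:mffe_correctness} to argue that the user correctly recovers the key $\beta$ in the Authentication phase: since the user possesses the correct secret $\alpha$ and a fresh biometrics $\bx \in \ARL(\bx_0)$, the $\mffe.\Rep(\pp, \bx, \alpha, \bde, w)$ invocation returns the same $\beta$ that RC computed via $\mffe.\Gen(\pp, \bx_0, g^\alpha)$ at registration. I would also note that the RC signatures $\sigma_{\ru}$ and $\sigma_{\rs}$ on $\uid||\comu$ and $\sid||\coms||H$ verify under $\vk$ by correctness of the signature scheme, so neither party aborts on the signature checks.

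Next, I would verify that the shared long-term secret is well-defined, i.e.\ $Z_u = Z_s$. Recall that $\comu = g^\alpha \cdot h^\beta$, $\coms = g^\gamma$, and $H = h^\gamma$. A direct computation gives
\[
\comu^\gamma = (g^\alpha h^\beta)^\gamma = g^{\alpha\gamma} h^{\beta\gamma} = (g^\gamma)^\alpha (h^\gamma)^\beta = \coms^\alpha \cdot H^\beta,
\]
so $Z_s = \hash(\comu^\gamma) = \hash(\coms^\alpha H^\beta) = Z_u$. This is the crucial algebraic step: the registration credential $\comu$ is a Pedersen-style commitment that lets each side reconstruct $\textsf{CDH}(\comu,\coms)$ from the information it holds.

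Then I would check that the subsequent SPAKE-style exchange also agrees. With $Z_u = Z_s$, the user obtains
\[
k_u = \left(\tfrac{S}{a^{Z_u}}\right)^{r_u} = (g^{r_s})^{r_u} = g^{r_s r_u},
\]
while the service provider obtains $k_s = (U/b^{Z_s})^{r_s} = (g^{r_u})^{r_s} = g^{r_u r_s}$, so $k_u = k_s$. Consequently $M_u = \sid||\uid||S||U||k_u = M_s$, which makes the hash-based tags match: $\textsf{Auth}_u = \hash(M_u||Z_u||1) = \hash(M_s||Z_s||1)$ and $\textsf{Auth}_s = \hash(M_s||Z_s||2) = \hash(M_u||Z_u||2)$, so neither authentication check aborts. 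Finally, both parties derive $K = \hash(M_u) = \hash(M_s)$, and since $r_u, r_s$ are freshly and independently sampled from $\Zq$, $K$ is a random session key.

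The only nontrivial obstacle is the MFFE recovery step, but this is already dispatched by \cref{pro:mffe_correctness}; the remainder is a chain of algebraic identities and a signature-correctness check, each of which can be verified in one line. I would present the proof as a short sequential derivation organized around the three equalities $\beta_{\mathrm{user}} = \beta_{\mathrm{RC}}$, $Z_u = Z_s$, and $k_u = k_s$, from which agreement on $M$, the authenticator tags, and the session key $K$ follow immediately.
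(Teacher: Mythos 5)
Your proposal is correct and follows essentially the same route as the paper's own proof: invoke \cref{pro:mffe_correctness} to recover $\beta$, establish $Z_u = \hash(\coms^\alpha H^\beta) = \hash(\comu^\gamma) = Z_s$, derive $k_u = k_s = g^{r_u r_s}$, and conclude that the authenticator tags and the session key $K$ agree. Your additional remark that the signature checks pass by correctness of the signature scheme is a minor detail the paper leaves implicit, but it does not change the argument.
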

\begin{proof}
According to the correctness of MFFE (Proposition \ref{pro:mffe_correctness}), the user recovers $\beta$ successfully given the correct $\alpha$ and the biometric $\bx \in \ARL(\bx_0)$. Then she can compute the shared long-term key with the SP, which is $Z_u = \hash(\coms^\alpha H^\beta) = \hash((g^\alpha h^\beta)^\gamma )= \hash(\comu^\gamma) = Z_s$, and the Diffie-Hellman key~$k_u$:
\begin{align*}
    k_u &= \left(\frac{S}{a^{Z_u}}\right)^{r_u} = \left(\frac{g^{r_s} \cdot a^{Z_s}} {a^{Z_u}}\right)^{r_u} = g^{r_sr_u}.
\end{align*}
From the server side, the SP computes the Diffie-Hellman key $k_s$, which is exactly equal to $k_u$
\begin{align*}
    k_s &= \left( \dfrac{U}{b^{Z_s}} \right )^{r_s}  = \left( \dfrac{g^{r_u} \cdot b^{Z_u}}{b^{Z_s} }\right )^{r_s} = g^{r_ur_s}.
\end{align*}
Due to $k_u = k_s = g^{r_ur_s}$ and $Z_u = Z_s$, we have
\begin{align*}
    \textsf{Auth}_u &= \hash(\sid||\uid||S||U||k_u||Z_u||1),\\
    \textsf{Auth}_s &= \hash(\sid||\uid||S||U||k_s||Z_s||2).
\end{align*}
Thus, both the user and the server are authenticated and successfully establish the random session key $K = \hash(\sid||\uid||S||U||
k_u) = \hash(\sid||\uid||S||U||
k_s)$.
\end{proof}

\subsubsection{Security} 

We follow the security model for authenticated key exchange of Bellare \et~\cite{bellare2000authenticated} and Pointcheval and Zimmer~\cite{pointcheval2008multi}. In our model, the adversary can corrupt either the biometric or the secret. 
The goal is to show that in either case, the generated session key should still remain semantically secure.

\textbf{Semantic security}. 
The semantic security of the session key is modeled using the Real-or-Random indistinguishability framework, following Pointcheval and Zimmer~\cite[Sec. 2.2]{pointcheval2008multi}). 
Essentially, this security notion
ensures that an adversary who is capable of eavesdropping, intercepting, forwarding, modifying the messages between the parties, and stealing one authentication factor of an user, can only gain a \textit{negligible} advantage in guessing whether the provided value is a real key or a random value compared to another adversary who performs a random guess.

In Theorem~\ref{theorem-security}, we provide an upper bound on the adversary's advantage $\AdvMfake$.
We adopt a standard game-based approach in which a series of games are constructed and analyzed, where $\textsf{Game}_0$ is the real attack game against the protocol. We keep track of the changes in the adversary's success probability from each game to the next, which eventually allow us to establish an upper bound on the adversary's advantage in winning the security game.
Note that the theorem and its proof cover both the cases when the adversary impersonates the client and the adversary impersonates the server, hence establishing the semantic security for mutual authentication. 

Let $\SuccSig$ denote the probability of the adversary's success in breaking the security of the signature scheme within a time bound $T$, $\SuccCDH$ the probability of the adversary's success in solving the computational Diffie-Hellman problem within a time bound $T$, $\SuccDLU/\SuccDL$ the probability of the adversary's success in solving the discrete logarithm problem for a secret sampled from a distribution $U/D_A$, and $\SuccDLsketch$ the probability of the adversary's success in solving the discrete logarithm problem with sketch for a secret sampled from a distribution $D_B$.
Here, $\GG$ denotes the underlying cyclic group.  

\begin{theorem}
\label{theorem-security}
    Consider the protocol $\mathcal{P}$ described in Section~\ref{sec:mfake-description}
    over a group $\mathbb{G}$ of prime order $q$, where the user secret is drawn from the distribution $D_A$ and the user biometric data is drawn from the distribution $D_X$. 
    Let $\mathcal{A}$ be an adversary against the security of the protocol within a time bound $T$, 
    executing at most $q_s$ sessions of $\mathcal{P}$
    and at most $q_h$ random oracle queries. Then we have
\begin{align*}
\AdvMfake &\leq \frac{q_s^2}{q^2} + \dfrac{q_h^2}{q} + 2 \SuccSig \\
&+ 2q_h^2 \SuccCDH + 2q_h\SuccCDH \\
&+ 2\SuccDLU +2\SuccDL \\
&+ 2\SuccDLsketch + 2\FMR,
\end{align*}
\end{theorem}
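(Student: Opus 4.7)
The plan is a standard game-hopping argument starting from the real Real-or-Random attack $\mathsf{Game}_0$ and producing a sequence of games whose statistical gaps match the terms on the right-hand side of the stated bound. In the final game the session key will be uniformly random and independent of the transcript, so the adversary's advantage collapses to zero and each hop must contribute one of the listed summands. Concretely, I would define hops for (i)~eliminating ephemeral exponent and message collisions (loss $q_s^2/q^2$), (ii)~eliminating random oracle collisions (loss $q_h^2/q$), (iii)~simulating RC's signatures and forbidding forgeries (loss $\SuccSig$), (iv)~replacing the hashed Diffie--Hellman pre-key $\hash(\comu^\gamma)$ by a random value (CDH-related loss), (v)~replacing the session Diffie--Hellman key $g^{r_u r_s}$ by a random value as in the SPAKE-style analysis of~\cite{abdalla2005simple}, and (vi--viii)~case-splitting on which authentication factor remains uncorrupted to reduce to $\DL$ or $\DLsketch$, leaving a residual $\FMR$ term for the event that an adversarial biometric lies in $\AR(\bx_0)$ by chance.

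\textbf{Sequence of games.} $\mathsf{Game}_1$ aborts on any collision among the honestly sampled exponents $r_u,r_s$ or the transmitted flows $(S,U)$ across at most $q_s$ sessions, contributing $q_s^2/q^2$. $\mathsf{Game}_2$ aborts on collisions among the $q_h$ random-oracle queries, contributing $q_h^2/q$. $\mathsf{Game}_3$ replaces RC's signatures on $(\uid\|\comu)$ and $(\sid\|\coms\|H)$ by simulated ones and aborts if the adversary ever submits a valid signature on a credential that was not registered; by EUF-CMA this costs $\SuccSig$ and from here on active attacks are confined to registered credentials. $\mathsf{Game}_4$ programs $\hash$ on the point $\comu^\gamma$ to an independent random value; detecting whether the adversary has queried $\hash$ at $\comu^\gamma$ across all sessions is reduced to a CDH solver, contributing $q_h^2\,\SuccCDH$. $\mathsf{Game}_5$ similarly randomises $\hash(\cdot\|k_u)=\hash(\cdot\|k_s)$ where $k_u=k_s=g^{r_u r_s}$, contributing $q_h\,\SuccCDH$. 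After these hops the authenticators $\textsf{Auth}_u,\textsf{Auth}_s$ and the final key $K$ are uniform in the adversary's view provided no fresh instance accepts with an adversarially injected transcript, so the remaining analysis concerns when the adversary can force such an acceptance.

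\textbf{Handling partial corruption.} A fresh instance accepting forces the adversary to supply messages whose implicit $Z$-value matches the honest one, and by $\mathsf{Game}_3$ the adversary must operate on registered credentials. For impersonation of an uncorrupted user, the adversary must produce $\coms^{\alpha}\cdot H^{\beta}$ for the honest user's pair $(\alpha,\beta)$ embedded in $\comu=g^\alpha h^\beta$. If $\alpha$ is uncorrupted, extracting $\alpha$ from the publicly visible $\ga$ inside $\comu$ reduces to $\DL$ for secrets sampled from the user-secret distribution and, symmetrically, for the server-side credential $\coms=g^\gamma$, contributing $\SuccDLU+\SuccDL$. If instead $\alpha$ is corrupted but the biometric is not, the adversary must recover $\beta$ from $(\pp,\bde,w,g^\beta)$ (obtained by dividing $\comu$ by $g^\alpha$ and raising $h$ appropriately); this is exactly the $\DLsketch$ game for the MFFE of Section~\ref{subsec:mffe-lattice}, contributing $\SuccDLsketch$. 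The only remaining path is that an adversary holding the helper data and the secret $\alpha$ manages to produce a biometric sample lying in $\AR(\bx_0)$ independently of the honest enrolment; by definition this has probability at most $\FMR$.

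\textbf{Main obstacle.} The delicate step is the reduction to $\DLsketch$, because the MFFE challenge consists only of $(\pp,g^\beta,\bde,w)$ and carries no enrolled template, yet the reduction must simulate every other session (including random-oracle programming for the $\hash(\comu^\gamma)$ point and the $g^{r_u r_s}$ point) in a way that preserves the independence properties exploited in the earlier hops. I would embed the challenge into the targeted user's credential by setting $\comu = g^\alpha\cdot(g^\beta)^{\eta}$ using a simulator-chosen relation $h=g^\eta$ (legitimate since $h$ is produced at setup by RC, whom the reduction plays), thereby keeping $\alpha$ known for the secret-corruption branch while hiding $\beta$ behind the challenge. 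All other sessions are simulated using the signing trapdoor from $\mathsf{Game}_3$ and the CDH-based programming trapdoors from $\mathsf{Game}_4$--$\mathsf{Game}_5$, so the adversary's view matches $\mathsf{Game}_5$ up to the abort events already accounted for. Once this simulation is verified, summing the hop losses and using $\AdvMfake = 2\Pr[\mathbf{S}]-1$ yields exactly the stated bound.
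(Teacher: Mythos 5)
Your proposal is correct and follows essentially the same route as the paper's proof: a random-oracle game-hopping argument with the same loss terms --- session-transcript collisions ($q_s^2/q^2$), hash collisions ($q_h^2/q$), signature forgery ($\SuccSig$), two CDH hops worth $q_h^2\,\SuccCDH$ and $q_h\,\SuccCDH$, and the identical partial-corruption case split producing $\SuccDLU$, $\SuccDL$, $\SuccDLsketch$, and $\FMR$ --- assembled via $\AdvMfake = 2\Pr[\bm{S}]-1$. The only material difference is how the CDH hops are instantiated (the paper extracts $\mathsf{CDH}(a,b)$ for the SPAKE elements $a,b$ from one or two entries of the hash list, in the Pointcheval--Zimmer style, whereas you program the random oracle at $\comu^\gamma$ and at $g^{r_u r_s}$ directly), but both instantiations yield the same loss terms and leave the structure of the argument unchanged.
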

\begin{proof}
    Due to the page limit, the proof of this theorem is provided in the supplemental material.
\end{proof}

\subsubsection{Performance} 

In this section, we analyze the performance of our proposed AKE protocol in terms of the computation and communication overhead.

\textbf{Computation}. The computational performance of our protocol is demonstrated in Table~\ref{tab:comp}. 
The group $\GG$ is an elliptic curve group of order $q$.
The computation costs at the user and the server are calculated based on the computation time of elliptic curve addition $T_\boxplus$
and point multiplication $T_\boxtimes$, biometric feature extraction $T_\textsf{FE}$, signature creation/verification $T_\textsf{Sgn}/T_\textsf{Vrf}$, and the running time $T_\textsf{Gen}$ of $\mffe.\Gen$, $T_\textsf{Rep}$ of $\mffe.\Rep$, and the computation time  $T_{\hash}$ of $\hash$. 

As can be seen from Table~\ref{tab:comp-comparison}, the protocol in~\cite{pointcheval2008multi} requires a linear number of elliptic curve point multiplications on both user and server sides, which is very expensive.
The computation overhead of our scheme is on par with that of \cite{gunasinghe2017privbiomtauth} and \cite{Zhang2019}. Note that with additional terms $T_\textsf{Vrf}$ and $T_{\hash}$ (which are of negligible cost), our scheme guarantees mutual authentication, which was not provided in the protocols of~\cite{gunasinghe2017privbiomtauth, Zhang2019}. Also, $T_\textsf{FuEx}$ for the fuzzy extractor in \cite{Zhang2019} can be considered equivalent to $T_\textsf{Rep}$ in our scheme.
As demonstrated by the empirical evaluation in Section~\ref{sec:experiments}, our protocol took only 0.93 seconds on averaged computational time for authenticated key exchange.

\textbf{Communication}. 
Let $\ell_i, \ell_q, \ell_e, \ell_s$ denote the sizes in bits of an identity ($\textsf{uid, sid}$), a $\mathbb{Z}_q$ element, an elliptic curve point, and a signature, respectively. Then the overall communication overhead of our protocol in an authenticated key exchange session is $5\ell_e + 2\ell_q + 2\ell_s + 2\ell_i$. As demonstrated in Table~\ref{tab:comm-comparison}, the communication overhead of the protocol in \cite{pointcheval2008multi} is very expensive, which is $(3n + 4)\ell_e$, much more greater than ours. The communication overhead of the protocol in \cite{gunasinghe2017privbiomtauth} is $5\ell_e + \ell_q + \ell_{\mathsf{IDT}} \geq 5\ell_e + \ell_q + (\ell_q + \ell_s + \ell_i) = 5\ell_e + 2\ell_q + \ell_s + \ell_i$. Compared to \cite{gunasinghe2017privbiomtauth}, the communication overhead of our AKE just adds an insignificant size of one signature and one identity string for server authentication which was not provided in \cite{gunasinghe2017privbiomtauth}. If only client authentication is provided in our protocol as in \cite{gunasinghe2017privbiomtauth}, then our communication time is less than that of \cite{gunasinghe2017privbiomtauth}, just $5\ell_e + \ell_q + \ell_s + \ell_i$. Similarly, the communication overhead of our protocol is also on par with that of \cite{Zhang2019}. With the experimental parameter settings in Section~\ref{sec:experiments}, the communication overhead of our AKE is  $(5\ell_e + 2\ell_q + 2\ell_s + 2\ell_i)/8 = 448$ bytes for mutual authentication and $(5\ell_e + \ell_q + \ell_s + \ell_i)/8 = 344$ bytes for unilateral client authentication, while the communication overheads of the protocols developed in~\cite{pointcheval2008multi,gunasinghe2017privbiomtauth,Zhang2019} are 147648 bytes, 392 bytes, 432 bytes, respectively. 

\begin{table*}
\centering
  \caption{The theoretical analysis of computation costs at the user side and the server side based on the computation time of elliptic curve addition `$T_\boxplus$', elliptic curve point multiplication `$T_\boxtimes$', biometric feature vector extraction `$T_\mathsf{FE}$', signature creation/verification `$T_\textsf{Sgn}, T_\textsf{Vrf}$', and the running time of $\mffe.\Gen$ `$T_\mathsf{Gen}$', $\mffe.\Rep$ `$T_\mathsf{Rep}$', and the time `$T_\hash'$ of $\hash$.}
  \label{tab:comp}
  \begin{tabular}{c|c|c}
    \toprule
    \textbf{Phases} & \textbf{User} & \textbf{RC/SP}\\
    \midrule
    Registration & $T_\boxtimes$ & $T_{\textsf{FE}} + T_{\textsf{Gen}}  + T_\textsf{Sgn} + T_\boxtimes + T_\boxplus$\\
    Authenticated Key Exchange & $T_{\textsf{Vrf}} + T_{\textsf{FE}} + T_{\textsf{Rep}} + 6T_\boxtimes + 3T_\boxplus +  4T_{\hash}$ & $T_{\textsf{Vrf}} + 5T_\boxtimes + 2T_\boxplus +  4T_\hash$\\
  \bottomrule
\end{tabular}
\end{table*}

\begin{table*}[]
\centering
\caption{Comparison of computational time of our authenticated key exchange protocol with the related works. We denote $T_\textsf{Cls}, T_\mathsf{PBKDF}, T_\mathsf{SDec}$ the computation time of classification, password based key derivation function, and symmetric decryption of the trained model in \cite{gunasinghe2017privbiomtauth}, $T_\textsf{ADec}$ and $T_\textsf{FuEx}$ the computation time of asymmetric decryption of user data and fuzzy extractor in \cite{Zhang2019}, respectively. The estimated computation costs exclude $T_\mathsf{FE}, T_\mathsf{Cls}, T_\mathsf{PBKDF}, T_\mathsf{SDec}, T_\mathsf{FuFx}, T_\mathsf{ADec}, T_\mathsf{Rep}$.}
\label{tab:comp-comparison}
\begin{tabular}{@{}c|c|c|c@{}}
\toprule
\textbf{Works}                                               & \textbf{User}                                                                                                                    & \textbf{Service provider}  & \textbf{Estimated comp. cost in ms}                                                      \\ \midrule
Pointcheval and Zimmer \cite{pointcheval2008multi}                                       & $T_{\textsf{FE}} + (2n+4)T_\boxtimes + nT_{\hash}$                                                             & $(2n+4)T_{\boxtimes}$      &       $151,037.44$                                                    \\ \midrule
Gunasinghe and Bertino\cite{gunasinghe2017privbiomtauth}  & $T_{\textsf{FE}} + T_{\textsf{Cls}} + T_{\mathsf{PBKDF}} + T_{\mathsf{SDec}} + 4T_{\boxtimes} + 2T_{\boxplus} + 2T_{\hash}$                                                       & $T_{\textsf{Vrf}} + 7T_{\boxtimes} + 3T_{\boxplus}$  &   $407.92$                                                 \\ \midrule
Zhang~\et\cite{Zhang2019}                   & $T_{\textsf{FE}} + T_{\textsf{FuEx}} + 6T_{\boxtimes} + 2T_\boxplus + 2T_{\textsf{H}}$                                                                                               & $T_{\textsf{ADec}} + 8T_{\boxtimes} + 2T_\boxplus + 2T_{\hash}$ &                       $515.66$       \\ \midrule
\textbf{Our protocol}                                         & $T_{\textsf{Vrf}} + T_{\textsf{FE}} + T_{\textsf{Rep}} + 6T_\boxtimes + 3T_\boxplus +  4T_{\hash}$ & $T_{\textsf{Vrf}} + 5T_\boxtimes + 2T_\boxplus +  4T_\hash$ & $410.58/407.96$ \\ \bottomrule
\end{tabular}
\end{table*}

\begin{table*}
\centering
  \caption{Comparison of communication cost of our authenticated key exchange with the related works. Let $\ell_i, \ell_q, \ell_e, \ell_s, \ell_m, \ell_\mathsf{IDT}$,  denote the sizes in bits of an identity ($\mathsf{uid, sid}$), a $\mathbb{Z}_q$ element, an elliptic curve point, a signature, a MAC message in \cite{Zhang2019}, and an IDT message in \cite{gunasinghe2017privbiomtauth}, respectively, where $\ell_\mathsf{IDT} \geq \ell_i + \ell_e + \ell_s$.}
  \label{tab:comm-comparison}
  \begin{tabular}{c|c|c}
    \toprule
        \textbf{Works} & \textbf{Comm. complexity} & \textbf{Estimated comm. cost in bytes}\\ \midrule
        Pointcheval and Zimmer \cite{pointcheval2008multi} & $(3n+4)\ell_e$ & $147,648$\\ \midrule
        Gunasinghe and Bertino \cite{gunasinghe2017privbiomtauth} & $5\ell_e + \ell_q + \ell_\mathsf{IDT}$ & $392$\\ \midrule
        Zhang~\et \cite{Zhang2019} & $6\ell_e + 2\ell_q + 2\ell_m + 2\ell_i$ & $432$ \\ \midrule
        \textbf{Our protocol} & $5\ell_e + 2\ell_q + 2\ell_s + 2\ell_i$ & $488/344$ \\ 
    \bottomrule
    \end{tabular}
\end{table*}

\begin{table*}
  \begin{minipage}{.5\textwidth}
  \centering
  \caption{Averaged computation time measurements in milliseconds of the basic operations in our implementation.}
  \label{tab:time-operations}
  \begin{tabular}{c|c|c|c|c|c|c|c}
    \toprule
    $T_\hash$ & $T_\mathsf{Sgn}$ & $T_\mathsf{Vrf}$ & $T_\boxtimes$ & $T_\boxplus$ & $T_\mathsf{FE}$ &$T_\mathsf{Gen}$ & $T_\mathsf{Rep}$\\
    \midrule
    $0.01$ & $0.8$ & $2.6$ & $36.8$ & $0.1$ & $190$ & $150$ & $280$\\
    \bottomrule
    \end{tabular}
  \end{minipage}%
  \hspace{0.5cm}
  \begin{minipage}{.5\textwidth}
  \centering
  \caption{Averaged computation time of MFAKE in seconds.}
    \label{tab:time-mfake}
  \begin{tabular}{c|c|c}
    \toprule
    \textbf{Phases} & \textbf{User} & \textbf{RC/SP}\\
    \midrule
    Registration & $0.03$ & $0.4$\\
    Authenticated Key Exchange & $0.74$ & $0.18$\\
    \bottomrule
    \end{tabular}
  \end{minipage}
\end{table*}

\subsection{Related Literature on Multi-Factor Authenticated Key Exchange} \label{sec:mfake-relatedworks}

Multi-factor authenticated key exchange schemes have been designed in a number of works, where multiple factors such as biometric data, secrets, and tokens are combined to provide stronger authentication \cite{Ometov2018}. 
Pointcheval and Zimmer \cite{pointcheval2008multi} proposed a multi-factor authenticated key exchange protocol that uses three factors - a password, a secret, and a biometric sample - to generate a key and establish a secure channel. The matching was performed based on a threshold of the distance between the candidate template and a reference template, both in bit strings. However, there exist some limitations in this protocol. 
The reference biometric template is assumed to be an $n$-bit string and is encrypted bit-wise with ElGamal encryption, which incurs a significant computation and communication overhead. Besides, no experimental evaluation was provided.

The authenticated key exchange protocols developed in  \cite{Yoon2013, He2015, Odelu2015, wazid2017secure, Zhang2019, ma2022outsider, kumar2023robust} requires a third-party (e.g., a registration center/identity authority, a gateway, a database server) to authenticate users-service providers and establish a session key between them. 
This would generate a heavy load on the registration centre, making it expensive to build a scalable and reliable system. Moreover, a compromised registration centre will render all keys generated insecure.

Gunasinghe and Bertino \cite{gunasinghe2017privbiomtauth} proposed a remote authentication scheme in  which users can authenticate to remote services using their biometrics. Their approach is based on a cryptographic
identity token that encodes a user's biometric identifier and a user's password which then can be used in an identification protocol and key agreement. In each user's device, there is a machine learning model, which was built for her biometrics by an identity authority in the enrollment process. In the authentication phase, this machine learning model predicts the user's biometric class label which then is used as a component of the user's identifier. Although their protocol can prevent user identity misbinding, it cannot resist user impersonation from a privileged insider as the identity authority has both the user's biometrics and the user's password. Besides, their protocol does not provide mutual authentication, making it susceptible to the outsider key compromise impersonation attack. An outsider attacker can force a user to generate any session key. 

Zhang~\et~\cite{Zhang2019} proposed a multi-factor authenticated key exchange scheme utilising a fuzzy extractor. Their scheme is not mutually authenticated and susceptible to an outsider key compromise impersonation attack as pointed out by Ma and He in \cite{ma2022outsider}. However, the protocol that Ma and He introduced requires a database server as a third-party and it was not experimentally validated.


Multi-factor authenticated key exchange was introduced to solve the significant drawback of biometrics spoofing attacks and permanent identity loss in single-biometrics factor schemes. Although many schemes were proposed, none of them has addressed all these issues simultaneously as pointed out in Table \ref{tab:literature-comparison}. Most of the proposed multi-factor authentication schemes in the literature are not tolerate attacks arising from malicious insiders and simultaneously prevent user identity misbinding, for example, 
\cite{barman2018provably, kumari2018provably, yang2018cryptanalysis, roy2018provably, zhou2020authentication, chuang2021cake}.
Many remote authentication protocols with biometrics have been theoretically analysed but often lack experimental evaluation of authentication accuracy on a real biometric dataset.

\section{Empirical evaluation} \label{sec:experiments}
We used the SDUMLA finger vein dataset~\cite{yin2011sdumla} (see Fig.~\ref{fig:fingervein} for samples)
to evaluate the empirical performance of our AKE protocol.
The dataset contains 3,816 finger veins from 106 individuals, each of whom provided 36 finger vein images of their 6 fingers (left index, left middle, left ring, right index, right middle, right ring). 

\begin{figure}
    \centering
    \includegraphics[scale=0.5]{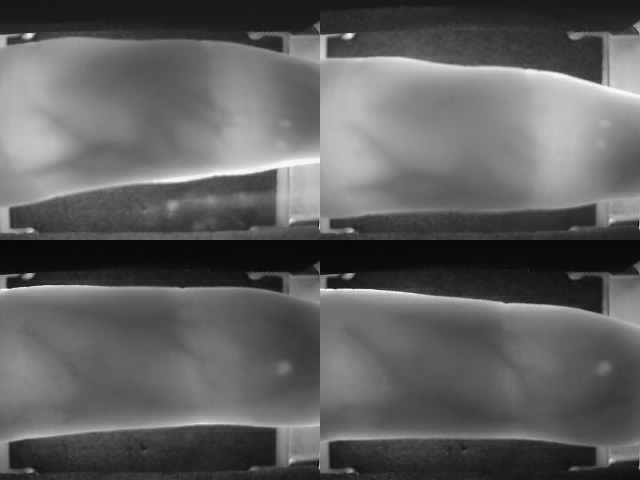}
    \caption{Two samples of two different (left) index finger veins from the SDUMLA dataset~\cite{yin2011sdumla}. The left and the right samples are from the same finger of the same individual. The top and the bottom ones are from two different people.}
    \label{fig:fingervein}
    \vspace{-10pt}
\end{figure}

We implemented 
our protocol in Python
using the libraries \texttt{pytorch}, \texttt{py\_ecc}, \texttt{hashlib}, \texttt{cryptography}, and \texttt{ecdsa}. 
The experiments were run on a 
Macbook laptop with 2 GHz Quad-Core Intel Core i5 and 8GB system memory. 

We used the state-of-the-art Elliptic curve BLS12-381, which is widely used in blockchain applications like Zcash and Ethereum 2.0. We implemented the function $\hash$ as hashing a bit string using \textsf{SHA3-256} to generate a 256-bit digest, which in turn is taken modulo a 255-bit number $q$ (the subgroup order of BLS12-381). 
The encryption and signature schemes employed are \textsf{AES} and \textsf{ECDSA}, respectively. Table~\ref{tab:time-operations} lists the averaged computation time measurements of the basic operations used in our protocol.

The raw finger vein images were fed into a feature extractor model to produce feature vectors in $\mathbb{R}^n$, which are compatible with our proposed scheme. In our experiments, the CNN-based deep learning model in \cite{kuzu2021loss} was employed, resulting in feature vectors in $\mathbb{R}^n$ with $n = 1024$. This learning model effectively increases the Euclidean distance of impostor pairs and decreases that of genuine pairs. The average time of the biometric feature extractor is  $T_\mathsf{FE} = 0.19\text{s}$.

We used the triangular lattice $\TriLatticeB$ spanned by a basis $\bm{B} \in \mathbb{R}^{n \times n}$ with the basis length $d$ (see Section \ref{FDS-with-Lattice}). $\TriLatticeB$ uniquely defines the acceptance region $\textsf{AR}$. A larger basis length $d$ results in a larger acceptance region $\textsf{AR}$. Thus, increasing the basis length $d$ has the effect of increasing $\mathsf{FMR}$ and decreasing $\mathsf{FNMR}$, and vice versa (see Fig.~\ref{fig: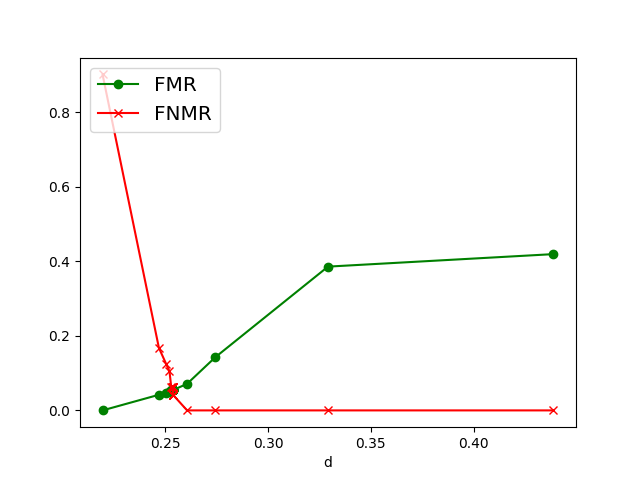}).
In our experiment, the basis length $d = 0.254$ provides us with the optimal acceptance region $\AR$ for achieving the equal error rate EER = 0.04\%. 
 
\begin{figure}
    \centering
    \includegraphics[scale=0.6]{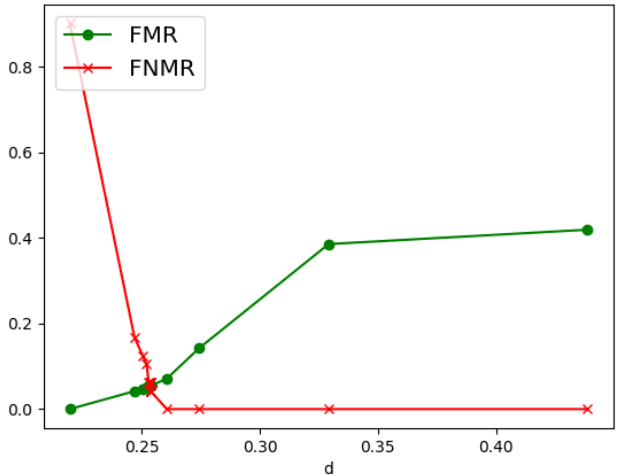}
    \caption{The 
    False Matching Rate (FMR) and False None Matching Rate (FNMR) (in \%) w.r.t. the basis length $d$.}
    \label{fig:d_far_frr.png}
    \vspace{-10pt}
\end{figure}

The average time of $\mffe.\Setup, \mffe.\Gen, \mffe.\Rep$ with different values of $n$ are given in Fig. \ref{fig:time-mffe}. In our experiment with $n=1024$, $\mffe.\Setup = 0.13\text{s}, \mffe.\Gen \hspace{-2pt}=\hspace{-2pt} 0.15\text{s}, \mffe.\Rep \hspace{-2pt} =\hspace{-2pt}0.28\text{s}$. 

\begin{figure}
    \centering
    \includegraphics[scale=0.48]{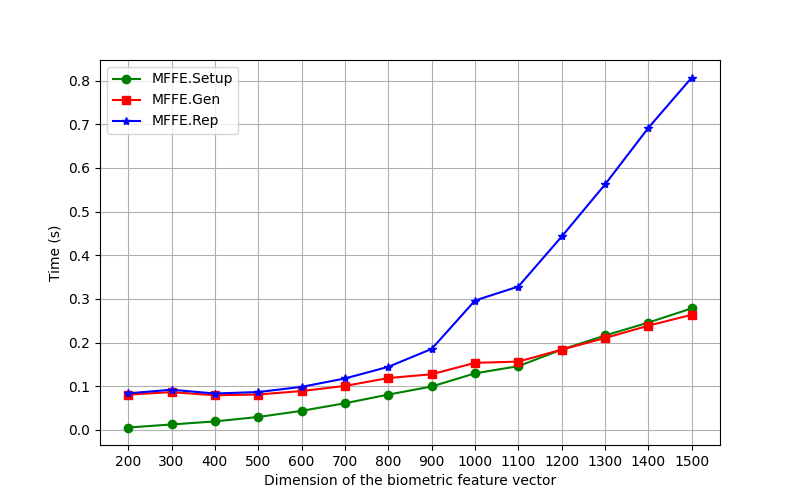}
    \caption{The average running times over 100 executions of $\mffe.\Setup, \mffe.\Gen, \mffe.\Rep$ with different feature vector lengths $n$.}
    \label{fig:time-mffe}
    \vspace{-10pt}
\end{figure}

We conducted experiments to measure the average time of our authenticated key exchange protocol at both the user and the service provider sides over $28620$ tests ($4770$ genuine pairs and $23850$ impostor pairs).
In the registration phase, the average time at the user side was just $0.03$ seconds. On the service provider side, the average time was $0.4$ seconds, mainly due to feature extraction and $\mffe.\Gen$. During the authenticated key exchange phase, the computation time on the user side was $0.74$ seconds, primarily due to feature extraction and $\mffe.\Rep$. On the other hand, the server time in the authenticated key exchange phase was $0.18$ seconds, mostly from the elliptic curve point multiplications. 

Table~\ref{tab:comp-comparison}, Table \ref{tab:comm-comparison} and Fig.~\ref{fig:comm-comp-comparison} compare our protocol with the most related works \cite{pointcheval2008multi,gunasinghe2017privbiomtauth,Zhang2019} in terms of communication and computation overheads. We estimated the computation cost of other protocols based on the averaged computation time measurements of the basic cryptographic operations in Table~\ref{tab:time-operations}. Note that, we do \textit{not} include the feature extractor operation time $T_\mathsf{FE}$, the classification time $T_\mathsf{Cls}$, the fuzzy extractor time $T_\mathsf{FuEx}$ in Zhang~\et~and Gunasinghe and Bertino protocols as they used specific algorithms which cannot be reproduced due to the lack of information in their papers. We also do not count the password based key derivation function time $T_\mathsf{PKCS}$, the decryption time of the trained model $T_\mathsf{SDec}$ in the computation cost of Gunasinghe and Bertino's protocol and the decryption time of user data $T_\mathsf{ADec}$ in the Zhang~\et's protocol. Even if so, our computation cost is comparable to theirs and much less than Pointcheval and Zimmer's protocol.

The communication overhead of our protocol (Table~\ref{tab:comm-comparison}), with the parameters  
$\ell_i = 64, \ell_e = 384, \ell_q = 256, \ell_m = 256, \ell_s = 512$, is only $(5\ell_e + 2\ell_q + 2\ell_s + 2\ell_i)/8 = 448$ bytes for mutual authentication and $(5\ell_e + \ell_q + \ell_s + \ell_i)/8 = 344$ bytes for unilateral client authentication. The communication overhead in the protocols of Poitcheval and Zimmer \cite{pointcheval2008multi}, Gunasinghe and Bertino  \cite{gunasinghe2017privbiomtauth}, and Zhang~\et \cite{Zhang2019}, which only provide unilateral client authentication, are $147648$ bytes, $392$ bytes, and $432$ bytes, respectively. 

These experimental results demonstrate that our proposed biometrics-based AKE is efficient, with communication and computation costs comparable to the state-of-the-art protocols.

\begin{figure}
    \centering
    \includegraphics[scale=0.45]{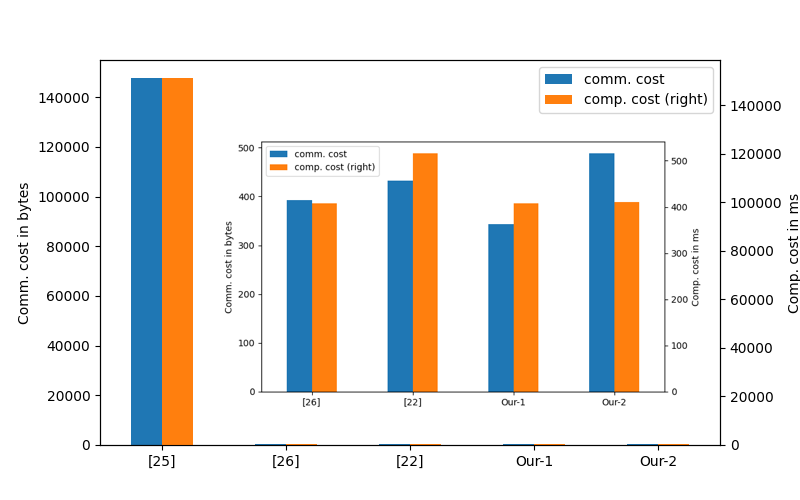}
    \caption{Performance comparison of our protocol with the most related ones. Our-1 and Our-2 correspond to our protocols with mutual authentication and unilateral client authentication, respectively.}
    \label{fig:comm-comp-comparison}
\end{figure}

\section{Conclusion}
\label{sec:conclusion}
To enhance security in biometrics-based authentication systems, we propose a novel multi-factor fuzzy extractor that integrates both a user's secret (e.g. a password) and a user's biometrics in the generation and reconstruction process of a cryptographic key. 
We employ this multi-factor fuzzy extractor to construct a new multi-factor authenticated key exchange protocol that possesses desirable features: first, the protocol provides mutual authentication; second, the user and service provide can authenticate each other without the involvement of the identity authority; third, the protocol can prevent user impersonation from a compromised authority; and finally, the protocol is reusable even when both biometric sample and secret are leaked. Most existing works on multi-factor AKE only have a subset of these features. We also formally prove that the proposed AKE protocol is semantically secure. 

We evaluated our AKE protocol on a real finger vein dataset and achieved a low equal error rate of 0.04\%. Our protocol was also reasonably fast, taking an average of only 0.93 seconds of computation time for the user and service provider to authenticate and establish a shared session key. Its communication overhead is only 448 bytes, which is negligible.

\section*{Acknowledgments}
The authors would like to thank Wataru Nakamura and other co-authors of~\cite{Katsumata2021} for very helpful discussions. 



\bibliographystyle{IEEEtran}
\bibliography{IEEEabrv,bioauth.bib}

\newpage

 

\begin{IEEEbiography}[{\includegraphics[width=1in,height=1.25in,clip,keepaspectratio]{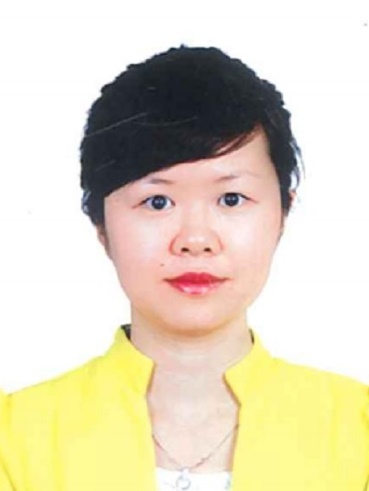}}]{Hong-Yen Tran}
	is currently a Research Associate with the School of Systems and Computing, University of New South Wales, Canberra, Australia. Her research interests are in the field of applied cryptography, data security and privacy, and bio-cryptography.
\end{IEEEbiography}

\vspace{11pt}

\begin{IEEEbiography}[{\includegraphics[width=1in,height=1.25in,clip,keepaspectratio]{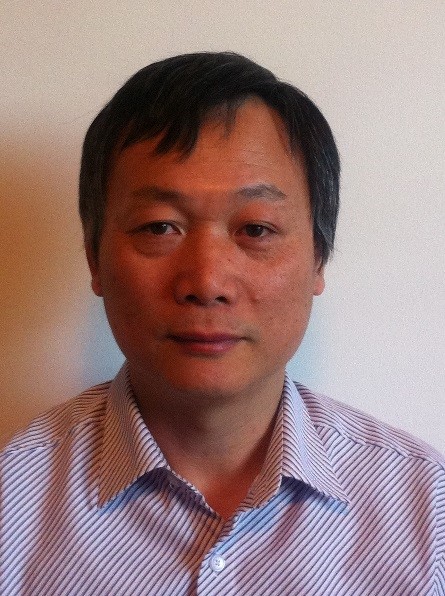}}]{Jiankun Hu}
	is currently a Professor with the School of Systems and Computing, University of New South Wales, Canberra, Australia. He is also an invited expert of Australia Attorney-General’s Office, assisting the draft of Australia National Identity Management Policy. He has received nine Australian Research Council (ARC) Grants and has served at the Panel on Mathematics, Information, and Computing Sciences, Australian Research Council ERA (The Excellence in Research for Australia) Evaluation Committee 2012. His research interests are in the field of cyber security covering intrusion detection, sensor key management, and biometrics authentication. He has many publications in top venues, including the IEEE TRANSACTIONS ON PATTERN ANALYSIS AND MACHINE INTELLIGENCE, the IEEE TRANSACTION COMPUTERS, the IEEE TRANSACTIONS ON PARALLEL AND DISTRIBUTED SYSTEMS, the IEEE TRANSACTIONS ON INFORMATION FORENSICS AND SECURITY, Pattern Recognition, and the IEEE TRANSACTIONS ON INDUSTRIAL INFORMATICS. He is a senior area editor of the IEEE TRANSACTIONS ON INFORMATION FORENSICS AND SECURITY.
\end{IEEEbiography}

\vspace{11pt}

\begin{IEEEbiography}[{\includegraphics[width=1in,height=1.25in,clip,keepaspectratio]{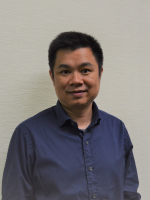}}]{Wen Hu}
Wen Hu is a professor at School of Computer Science and Engineering, the University of New South Wales (UNSW). Much of his research career has focused on the novel applications, low-power communications, security, signal processing and machine learning Cyber Physical Systems (CPS) and Internet of Things (IoT). Hu published regularly in the top rated sensor network and mobile computing venues such as ACM/IEEE IPSN, ACM SenSys, ACM MobiCOM, ACM UbiCOMP, IEEE PerCOM, ACM TOSN, IEEE TMC, IEEE TIFS and IEEE TDSC.

Hu was a principal research scientist and research project leader at CSIRO Digital Productivity Flagship, and received his Ph.D from the UNSW. He is a recipient of prestigious CSIRO Office of Chief Executive (OCE) Julius Career Award (2012 - 2015) and multiple research grants from Australian Research Council, CSIRO and industries.

Hu is a senior member of ACM and IEEE, and is an associate editor of ACM TOSN, the general chair of CPS-IoT Week 2020, co-chair the program committee of ACM/IEEE IPSN 2023 and ACM Web Conference (WWW 2023, Systems and Infrastructure for Web, Mobile Web, and Web of Things track), as well as serves on the organising and program committees of networking conferences including ACM/IEEE IPSN, ACM SenSys, ACM SigCOMM, ACM MobiCOM and ACM MobiSys.

Hu actively commercialises his research results in smart buildings and IoT, and his endeavours include working as the Chief Technical Officer (part time) and the Chief Scientist (part time) in Parking Spotz (2021 - 2022) and WBS Tech (2016-2020) respectively.
\end{IEEEbiography}

\clearpage

{\large\textbf{Supplemental Material: Proof of Theorem~\ref{theorem-security}}}.

We follow the security model for authenticated key exchange of Bellare \et~\cite{bellare2000authenticated} and Pointcheval and Zimmer~\cite{pointcheval2008multi}. In our model, the adversary can corrupt either the biometric or the secret. 
The goal is to show that in either case, the generated session key should still remain semantically secure.

We first define the necessary notation in the model. We assume that servers and clients (users) can activate several instances at a time to run several sessions simultaneously. In this model, the instance $i$ of the party $P$, denoted $\Pi^i_P$, where $P$ can be a user or a server, includes:
\begin{itemize}
    \item $\textsf{pid}^i_P$: the instance partner that $\Pi^i_P$ believes it is interacting with,
    \item $\sid^i_P$: the session identifier, which is the session transcript seen by $\Pi^i_P$, 
    \item $\textsf{acc}^i_P$: a boolean variable denotes if the instance $\Pi^i_P$ goes in an accepted state at the end of the session.
\end{itemize}
The two instances $\Pi^i_P$ and $\Pi^j_{P'}$
are said to be partners if the following conditions are fulfilled:
\begin{itemize}
    \item $\textsf{pid}^i_P = \Pi^j_{P'}$ and $\textsf{pid}^j_{P'} = \Pi^i_{P}$;
    \item $\sid^i_P = \sid^j_{P'} \neq \textsf{NULL}$;
    \item $\textsf{acc}^i_P = \textsf{acc}^j_{P'} = \texttt{True}$.
\end{itemize}

\textbf{Adversaries' capabilities}. 
The interaction between an adversary $\mathcal{A}$ and the protocol participants is simulated via oracle queries, which also models the adversary's capabilities 
in a real attack. 
The oracle query types available to the adversary are as follows:
\begin{itemize}
    \item $\textsf{Execute}(\Pi^i_U, \Pi^j_S)$: This oracle models the capability of the adversary in eavesdropping the messages over the transmission network. This oracle outputs the transcript of an honest execution of the protocol between $\Pi^i_U$ and $\Pi^j_S$.
    \item $\textsf{Send}(m, \Pi^i_P)$: This oracle models the capability of the adversary in intercepting, forwarding, modifying or creating the message $m$ sent to the instance $\Pi^i_P$. This oracle outputs the answer of $\Pi^i_P$ to the message $m$.
    \item $\textsf{Corrupt}(U, a)$: This query models the capability of the adversary in breaking authentication factors of a user~$U$. This oracle outputs $\alpha$ if $a = 1$, and $\bx \in \ARL(\bx_0)$ if $a = 2$. 
\end{itemize}

A user $U$ is \textit{fully corrupted} by an adversary if all $\textsf{Corrupt}$ queries targeting $U$, namely $\textsf{Corrupt}(U, 1)$ and $\textsf{Corrupt}(U, 2)$, have been called. 
All $\textsf{Corrupt}$ queries must be executed before a session starts. 
An instance $\Pi^i_P$ is \textit{fresh} if upon accepted $\Pi^i_P$ is not fully corrupted.
A session key is \textit{fresh} if both participants are fresh instances.

\textbf{Semantic security}. 
The semantic security of the session key is modeled using the Real-or-Random indistinguishability framework, following Pointcheval and Zimmer~\cite[Sec. 2.2]{pointcheval2008multi}). 
Essentially, this security notion
ensures that an adversary who is capable of eavesdropping, intercepting, forwarding, modifying the messages between the parties, and stealing one authentication factor of an user, can only gain a \textit{negligible} advantage in guessing whether the provided value is a real key or a random value compared to another adversary who performs a random guess.

\textbf{Security Game for the Protocol $\mathcal{P}$}
\label{def:attackgame}
At the beginning of the game, the challenger chooses a bit $b$. The adversary can ask the challenger $\mathsf{Test}(\Pi^i_P)$ queries. If $\Pi^i_P$ is not fresh, the challenger outputs $\perp$, otherwise, the challenger answers the adversary with either real keys generated by $\Pi^i_P$ honestly following protocol $\mathcal{P}$ if $b = 1$ or random keys if $b = 0$. At the end of the game, the adversary outputs a bit $b'$. If $b' = b$ then the adversary wins the games, otherwise, she loses.

Let $\bm{S}$ be the event that the adversary $\mathcal{A}$ correctly guesses the bit $b$ used by the challenger during the above attack game. The \textit{advantage} of $\mathcal{A}$ in winning the game is defined as
\begin{align}
    \label{eq:advmfake}
    \AdvMfake &\triangleq \Pr[b' = 1|b = 1] - \Pr[b'=1|b=0]   \nonumber \\ 
    &= \Pr[b' = 1|b = 1] - (1 - \Pr[b'=0|b=0])  \nonumber \\ 
    &= 2 \Pr[\bm{S}] - 1.
\end{align}

In Theorem~\ref{theorem-security}, we provide an upper bound on the adversary's advantage $\AdvMfake$.
We adopt a standard game-based approach in which a series of games are constructed and analyzed, where $\textsf{Game}_0$ is the real attack game against the protocol. We keep track of the changes in the adversary's success probability from each game to the next, which eventually allow us to establish an upper bound on the adversary's advantage in winning the security game.
Note that the theorem and its proof cover both the cases when the adversary impersonates the client and the adversary impersonates the server, hence establishing the semantic security for mutual authentication. 

Let $\SuccSig$ denote the probability of the adversary's success in breaking the security of the signature scheme within a time bound $T$, $\SuccCDH$ the probability of the adversary's success in solving the computational Diffie-Hellman problem within a time bound $T$, $\SuccDLU/\SuccDL$ the probability of the adversary's success in solving the discrete logarithm problem for a secret sampled from a distribution $U/D_A$, and $\SuccDLsketch$ the probability of the adversary's success in solving the discrete logarithm problem with sketch for a secret sampled from a distribution $D_B$.
Here, $\GG$ denotes the underlying cyclic group.  


\begin{proof}[Proof of Theorem~\ref{theorem-security}]

The proof consists of a sequence of games, starting with the real game $\game_0$ and ending with $\game_6$. In each $\game_i$, we consider the events $\bm{S}_i$ of $\mathcal{A}$ successes that occurs if the adversary $\mathcal{A}$ correctly guesses the bit $b$ chosen by the challenger at the beginning of $\game_i$.
Let $\Delta_i \triangleq \Pr[\bm{S}_i]-\Pr[\bm{S}_{i+1}]$ denote the distance between the probabilities of the adversary's successes in $\game_i$ and $\game_{i+1}$.

$\textbf{\game}_0$: The real attack game of $\mathcal{A}$ against protocol $\mathcal{P}$. 

$\textbf{\game}_1$: We simulate the hash oracle $\hash$ as 
a random oracle with a list $\Lambda_\hash$ according to the following rule. 
For a hash query $\hash(\q)$, if $(\q, \r)$ appears in $\Lambda_\hash$ then the answer is $\r$. Otherwise, a new random $\r$ is set to be the answer and $(\q, \r)$ is added to $\Lambda_\hash$. 

All the instances for $\mathsf{Execute, Send, Reveal, Corrupt}$ and $\mathsf{Test}$ queries are simulated as the real players would do in the protocol. $(m_{u_1}||m_{s_1}||m_{u_2}||m_{s_2})$ is added to the map $\Lambda_\mathsf{T}$ to keep track of the session transcripts. As $\game_1$ and $\game_0$ are indistinguishable in the random oracle model, we have $\Delta_0 = 0$.

$\textbf{\game}_2$: In this game, we cancel the games in which the adversary succeeds in forging the RC's signature. As $\game_2$ and $\game_1$ only differ if the RC's signature is forged, the distance between the probabilities of the adversary's success in these two games is bounded by the probability of the adversary's successes in breaking the security of the signature scheme within the time bound $T$, which is $\SuccSig$. Hence,
\begin{align} 
    \Delta_1 &\leq \SuccSig. \label{eq:deltaS1} 
\end{align}

$\textbf{\game}_3$: To guarantee the independence of the sessions, from $\game_2$ we cancel the games in which collisions on the session transcripts
occur. Since there is at least one honest party, $S||\mathsf{Auth}_s$ or $U||\mathsf{Auth}_u$ is uniformly distributed. As we cancel the games in which the adversary succeeds in forging the RC’s signature from $\game_2$, the probability of the collision on the session transcript is bounded from above by the probability of choosing the same $S||\mathsf{Auth}_s \in \GG \times \Zq$ or the same $U||\mathsf{Auth}_u \in \GG \times \Zq$ in some two sessions among $q_s$ sessions. As $|\GG|=|\Zq|=q$, they are bounded from above by $\frac{q_s^2}{2q^2}$. This can be achieved by either using the birthday paradox or by using a simple union bound\footnote{Consider $q_s$ i.i.d. uniformly distributed pairs $(a_i,b_i)\in \Zq^2$. Let $E$ denote the event that some two among $q_s$ pairs are the same, and $E_{i,j}$ the event that $(a_i,b_i)\equiv (a_j,b_j)$. Then $E = \cup_{1\leq i\neq j \leq q_s}E_{i,j}$, and hence, by the union bound, $\Pr[E]\leq \sum_{1\leq i\neq j \leq q_s}\Pr[E_{i,j}] = \binom{q_s}{2}\frac{1}{q^2}<\frac{q_s^2}{2q^2}$.}. 
As $\game_1$ and $\game_2$ only differ if the collisions of session transcripts happen, the distance between the probabilities of the adversary's successes in these two games is bounded by the probability of such collisions. Therefore, 
\begin{align} 
    \Delta_2 &\leq \frac{q_s^2}{2q^2}. \label{eq:deltaS2} 
\end{align}

$\textbf{\game}_4$: In this game, we avoid the collisions amongst the hash queries asked by $\mathcal{A}$ to $\hash$ by aborting the game if the answer $\r$ of the random oracle $\hash$ to the query $\q$ of the adversary $\mathcal{A}$ is found in $\Lambda_\mathcal{A}$.

As $\game_2$ and $\game_3$ only differ if a hash collision occurs among $q_h$ hash queries, the distance between the probabilities of the adversary's successes in these two games is bounded by the probability of such collision. Therefore, using either the union bound or the birthday paradox as in $\game_2$, we derive that
\begin{align} 
    \Delta_3 &\leq \dfrac{q_h^2}{2q}. \label{eq:deltaS3}
\end{align}

$\textbf{\game}_5$: From $\game_4$, for all fresh sessions, we replace the generations of the authenticators $\mathsf{Auth}_u, \mathsf{Auth}_s$ and the session key $K$ with a private random oracle $\hash'$ so that the values of $\mathsf{Auth}_u, \mathsf{Auth}_s, K$ are independent of $\hash$ and $Z_u, Z_s, k_u, k_s$. Thus we can omit the computation of $Z_u, Z_s, k_u, k_s$
and set $S\hspace{-2pt} =\hspace{-2pt} g^{s^*}, U \hspace{-2pt}=\hspace{-2pt} g^{u^*}$ with $s^*, u^* \hspace{-2pt}\sample\hspace{-2pt} \mathbb{Z}_q$. 

Denote $\mathsf{AskH5}$ the event that the adversary asks the \textit{correct} $(\mathsf{Auth}_u, M_u)$ or $(\mathsf{Auth}_s, M_s)$ to the $\hash$ oracle while $\hash'$ has been used by the challenger. 
Note that $\game_4$ and $\game_5$ return the same outcome when $\mathsf{AskH5}$ does \textit{not} happen. Therefore,
\begin{align} \label{eq:deltaS4}
    \Delta_4 &\leq \Pr[\mathsf{AskH5}] .
\end{align}


Note that, in $\mathsf{Game}_5$, because the session keys $K$ are computed from a private random oracle $\hash'$, they appear to $\mathcal{A}$ as random. 
Therefore, 
\begin{align} 
    \Pr[\bm{S}_5] &= \dfrac{1}{2}. \label{eq:S5} 
\end{align}

$\textbf{\game}_6$: 
From $\game_5$, we cancel the games in which, for some transcript generated before any $\mathsf{Corrupt}$ query is made and coming from an execution involving $\mathcal{A}$ impersonates a participant (the user or the server), there are two tuples $\left(S, U, \mathsf{CDH}\left(\frac{S}{a^{Z_k}}, \frac{U}{b^{Z_k}} \right)\right)$ corresponding to two different non-zero $Z_k$ ($k = 0, 1$), such that
$\left(\sid||\uid||S||U||\mathsf{CDH}\left(\frac{S}{a^{Z_k}}, \frac{U}{b^{Z_k}} \right)||Z_k||1\right)$ 
(if $\mathcal{A}$ impersonates the user) or ($\sid||\uid||S||U||\mathsf{CDH}\left(\frac{S}{a^{Z_k}}, \frac{U}{b^{Z_k}} \right)||Z_k||2$) (if $\mathcal{A}$ impersonates the server)
are in the map $\Lambda_\hash$. \\

If such a pair exists, we have
\begin{align}
    \mathsf{CDH} (a, b) &= \dfrac{ \mathsf{CDH} \left(\dfrac{S}{a^{Z_0}}, \dfrac{U}{b^{Z_0}} \right) ^ \frac{1}{Z_0(Z_1-Z_0)}}{\mathsf{CDH} \left(\dfrac{S}{a^{Z_1}}, \dfrac{U}{b^{Z_1}} \right) ^ \frac{1}{Z_1(Z_1-Z_0)}} \times U^\frac{-\mathsf{DL}(S)} {Z_0 Z_1} \label{eq:cdh1} \\
    &= \dfrac{ \mathsf{CDH} \left(\dfrac{S}{a^{Z_0}}, \dfrac{U}{b^{Z_0}} \right) ^ \frac{1}{Z_0(Z_1-Z_0)}}{\mathsf{CDH} \left(\dfrac{S}{a^{Z_1}}, \dfrac{U}{b^{Z_1}} \right) ^ \frac{1}{Z_1(Z_1-Z_0)}} \times S^\frac{-\mathsf{DL}(U)} {Z_0 Z_1}. \label{eq:cdh2}
\end{align}

As we consider the executions in which $\mathcal{A}$ impersonates the user or $\mathcal{A}$ impersonates the server (i.e., $S$ or $U$ is simulated, respectively), the challenger knows the discrete log of $S$ ($\mathsf{DL}(S)$) or the discrete log of $U$ ($\mathsf{DL}(U)$). The challenger can pick randomly from $\Lambda_\hash$ to get the two tuples $\left(\mathsf{CDH}\left(\frac{S}{a^{Z_0}}, \frac{U}{b^{Z_0}})\right), Z_0\right), \left(\mathsf{CDH}\left(\frac{S}{a^{Z_1}}, \frac{U}{b^{Z_1}})\right), Z_1 \right)$ with probability $\frac{1}{q_h^2}$ to compute $\mathsf{CDH}(a, b)$ according to $(\ref{eq:cdh1}, \ref{eq:cdh2})$. 
\begin{align} 
    \Delta_5 &\leq q_h^2 \SuccCDH. \label{eq:deltaS5}
\end{align}


We denote by $\mathsf{AskH6}$ the same event as $\mathsf{AskH5}$ but in $\game_6$. We divide $\mathsf{AskH6}$ into two types:

    \textit{Type-1}: The event $\mathsf{AskH6}$ with the executions between a user instance and a server instance, denoted by $\mathsf{AskH6passive}$;
    
    \textit{Type-2}: The event $\mathsf{AskH6}$ with the executions between the adversary and a user instance or a server instance with at least one message flow is not oracle-generated (i.e., the adversary impersonates the server), denoted by $\mathsf{AskH6active}$;

    
We have
\begin{align}
    \Pr[\mathsf{AskH5}] &\leq \Pr[\mathsf{AskH6}] + \Delta_5 
\end{align}

Assume there is a tuple $\left(S, U, \mathsf{CDH}\left(\frac{S}{a^Z}, \frac{U}{b^Z} \right)\right)$ such that 
$(\sid||\uid||S||U||\mathsf{CDH}\left(\frac{S}{a^Z}, \frac{U}{b^Z}||Z||1 \right))$ or $(\sid||\uid||S||U||\mathsf{CDH}\left(\frac{S}{a^Z}, \frac{U}{b^Z}||Z||2 \right))$ is in the map $\Lambda_\hash$. We consider the cases as follows:

\begin{itemize}
    \item The corresponding transcript of the tuple $\left(S, U, \mathsf{CDH}\left(\frac{S}{a^Z}, \frac{U}{b^Z} \right)\right)$ is generated from \textit{Type-1} event. 
    
    \hspace{12pt}Because both $S$ and $U$ have been simulated and the discrete log of $S$ ($s^*$) and the discrete log of $U$ ($u^*$) are known, we have
    \begin{align}
        \mathsf{CDH}(a, b) = \left( \dfrac{g^{u^*s^*} \cdot (a^{u^*}b^{s^*})^Z}{\mathsf{CDH}\left(\frac{S}{a^Z}, \frac{U}{b^Z})\right)} \right)^{\frac{1}{Z^2}}.
    \end{align}
    
    \hspace{12pt}The challenger can pick randomly from $\Lambda_\hash$ to get the value of $\mathsf{CDH}\left(\frac{S}{a^Z}, \frac{U}{b^Z})\right)$ with probability $\frac{1}{q_h}$ to compute $\mathsf{CDH}(a, b)$. Therefore, the probability that $\mathcal{A}$ queries $\hash$ at $(\sid||\uid||S||U||\mathsf{CDH}\left(\frac{S}{a^Z}, \frac{U}{b^Z} \right))$ where $Z = \mathsf{CDH}(\comu, \coms)$ in Type-1 event is upper bounded by
        \begin{equation} \label{eq:ask6-passive}
            \Pr[\mathsf{AskH6passive}]  \leq q_h \SuccCDH.
        \end{equation}
    \item The corresponding transcript of the tuple $\left(S, U, \mathsf{CDH}\left(\frac{S}{a^Z}, \frac{U}{b^Z} \right)\right)$ is generated from~\textit{Type-2} event. 
    
    The probability that $\mathcal{A}$ queries $\hash$ at $\left(\sid||\uid||S||U||\mathsf{CDH}\left(\frac{S}{a^Z}, \frac{U}{b^Z} \right)||Z||1\right)$  or $\left(\sid||\uid||S||U||\mathsf{CDH}\left(\frac{S}{a^Z}, \frac{U}{b^Z} \right)||Z||2\right)$ where $Z = \mathsf{CDH}(\comu, \coms)$ is equal to the probability that $\mathcal{A}$ successfully computes $Z$, which is upper bounded by the probability of the adversary's success in solving the discrete logarithm problem for the server's secret $\gamma$ within the time bound $T$ ($\SuccDLU$) and the probability of the adversary's success in obtaining the user's secrets $(\alpha, \beta)$ given one $\mathsf{Corrrupt}$ query. 

    

    We analyze the probability of $\mathcal{A}$'s success in obtaining ($\alpha, \beta$) of $\comu$ to impersonate the user in the following cases of corruption when one of the authentication factors (but not both) is revealed to $\mathcal{A}$:
    \begin{itemize}
    \item $\mathsf{Corrupt}(U, a=1)$ has been made by $\mathcal{A}$, that is, $\mathcal{A}$ knows $\alpha$.\\
     Knowing $\alpha$, $\mathcal{A}$ computes $h^\beta = \dfrac{\comu}{\ga}$.
    
     The probability of $\mathcal{A}$'s success in obtaining $\beta$ includes the probability of $\mathcal{A}$'s success in solving the discrete logarithm problem with sketch $\DLsketch$ for the key $\beta$ within the time bound $T$ and the false matching rate of the biometric space, hence it is equal to $(\SuccDLsketch + \FMR)$. 
     
     \item $\mathsf{Corrupt}(U, a=2)$ has been made by $\mathcal{A}$ (i.e., $\mathcal{A}$ obtains a close biometric sample $\bm{x} \in \ARL(\bx_0)$). Consider $\mathcal{A}$ as a privileged insider adversary, which means $\mathcal{A}$ knows ($\bx_0, \beta, \ga)$ (e.g. a compromised RC leaked these data in the registration process). Because $\alpha$ and $\bx$ are independent, the probability of $\mathcal{A}$'s success in obtaining $\alpha$ is the probability of $\mathcal{A}$'s success in solving the discrete logarithm problem for the secret $\alpha$ within the time bound $T$, which is $\SuccDL$.
    \end{itemize}
\end{itemize}

Therefore, we have
\begin{align} \label{eq:ask6-active}
    \Pr[\mathsf{AskH6active}] 
    &\leq  \SuccDLU + \SuccDL \nonumber \\
    &+ \SuccDLsketch + \FMR .
\end{align}

From (\ref{eq:ask6-passive}) and (\ref{eq:ask6-active}), we have
\begin{align} \label{eq:ask6}
    \Pr[\mathsf{AskH6}] &\leq q_h \SuccCDH  + \SuccDLU + \SuccDL \nonumber \\
    &+ \SuccDLsketch + \FMR .
\end{align}

Finally, the advantage of $\mathcal{A}$ in breaking the semantic security of session keys in the proposed protocol multi-factor authenticated key exchange is:
\begin{align*}
    \AdvMfake &= 2  \mathsf{Pr}[\textbf{S}_0] - 1  \\
    &= 2  (\mathsf{Pr}[\textbf{S}_0] - \mathsf{Pr}[\textbf{S}_5]) + 2  \mathsf{Pr}[\textbf{S}_5] - 1 \\
    &\leq 2  \sum_{i=0}^4{\Delta_i} + 2  \mathsf{Pr}[\textbf{S}_5] - 1 \\
    &= 2  \sum_{i=0}^4{\Delta_i}  \\
    &\leq 2  \sum_{i=0}^3{\Delta_i} + 2  (\Delta_5 + \mathsf{Pr}[\mathsf{AskH6}])
\end{align*}
From $(\ref{eq:deltaS1}), (\ref{eq:deltaS2}), (\ref{eq:deltaS3}), (\ref{eq:deltaS5}), (\ref{eq:ask6})$, we have
\begin{align*}
\AdvMfake &\leq \frac{q_s^2}{q^2} + \dfrac{q_h^2}{q} + 2 \SuccSig \\
&+ 2q_h^2 \SuccCDH + 2q_h\SuccCDH \\
&+ 2\SuccDLU +2\SuccDL \\
&+ 2\SuccDLsketch + 2\FMR, 
\end{align*}
which completes the proof.
\end{proof}

\vfill

\end{document}